\newtheorem{theorem}{Theorem}
\newtheorem{lemma}[theorem]{Lemma}
\newtheorem*{theorem*}{Theorem}
\newtheorem*{lemma*}{Lemma}
\newcommand{\spread}{\Delta}
\newcommand{\aspect}{\mathrm{aspect}}
\newcommand{\e}{\varepsilon}
\newcommand{\R}{\mathbb{R}}
\newcommand{\dist}{\mathbf{d}}
\newcommand{\pdist}{\mathbf{\pi}}
\newcommand{\nerve}{\mathrm{Nrv}}
\newcommand{\dgm}{\mathrm{Dgm}}
\newcommand{\pow}{\mathrm{Pow}}
\newcommand{\vor}{\mathrm{Vor}}
\newcommand{\ball}{\mathrm{ball}}
\newcommand{\radius}{\mathrm{radius}}
\newcommand{\birth}{\mathrm{birth}}
\title{A Sparse Delaunay Filtration}
\author{Donald R. Sheehy}
\begin{document}
  \maketitle

  \begin{abstract}
  We show how a filtration of Delaunay complexes can be used to approximate the persistence diagram of the distance to a point set in $\R^d$.
  Whereas the full Delaunay complex can be used to compute this persistence diagram exactly, it may have size $O(n^{\lceil d/2 \rceil})$.
  In contrast, our construction uses only $O(n)$ simplices.
  The central idea is to connect Delaunay complexes on progressively denser subsamples by considering the flips in an incremental construction as simplices in $d+1$ dimensions.
  This approach leads to a very simple and straightforward proof of correctness in geometric terms, because the final filtration is dual to a $(d+1)$-dimensional Voronoi construction similar to the standard Delaunay filtration complex.
  We also, show how this complex can be efficiently constructed.
\end{abstract}

  \section{Introduction}
\label{sec:introduction}

The persistent homology of the distance to a set of points $P$ in $\R^d$ describes the evolution of the topology of $\bigcup_{p\in P}\ball(p,\alpha)$ as $\alpha$ grows from $0$ to $\infty$.
It is a multi-scale description of the ``shape'' of the point set.
The theory of persistent homology has its origins in the work by Edelsbrunner et al.~\cite{edelsbrunner83shape,edelsbrunner95union} on $\alpha$-hulls and $\alpha$-shapes and their relation to the Delaunay triangulation.
The paper that introduced persistent homology~\cite{edelsbrunner02topological} was based on ordering the simplices of the Delaunay triangulation.
Many papers have followed that use alternatives to the Delaunay triangulation in different spaces, but in Euclidean space, the Delaunay triangulation has a certain perfection in its ability to minimally represent the topology of the distance function.
For approximations, the potential $O(n^{\lceil d/2 \rceil})$ size of the Delaunay triangulation cannot compete with the linear size of so-called sparse filtrations~\cite{sheehy13linear,botnan14approximating,cavanna15geometric,dey16simba}.
In this paper, we combine the ideas from sparse filtrations with the Delaunay triangulation, achieving both the elegance of the Delaunay triangulation and the worst-case linear-size guarantees.
Along the way, we will give a new topological perspective to the classic approach to computing Delaunay triangulations by flips.

A \emph{flip} in a $2$-dimensional triangulation is the replacement of two adjacent triangles whose four vertices are in convex position with the other two possible triangles on the same vertices.
A classic way of visualizing flips is to view the two configurations as projections of the upper and lower hull of a tetrahedron in three dimensions (see Fig.~\ref{fig:flips}).
This view also permits one to interpret other operations as flips, such as the insertion of a new vertex splitting one triangle into three.
We call the former class of flips $(2,2)$-flips and the latter $(1,3)$-flips, indicating the number of triangles before and after the flip.
More generally, there are $(k, d+1-k)$-flips for sets of $d+2$ points in $\R^d$.
These are likewise interpreted as projections of $(d+1)$-simplices.
In this paper, we will use the $(d+1)$-simplices of the flips to give a topological connection between the Delaunay triangulation of a set of points and the Delaunay triangulation of a subset.
Throughout, we will distinguish between the terms Delaunay triangulation for the embedded geometric complex on a set of points and a Delaunay complex which is the corresponding abstract simplicial complex.
As will become clear, the addition of the flip simplices will result in a simplicial complex that will not be embedded in $\R^d$.

\begin{figure}
  \includegraphics[width=\textwidth]{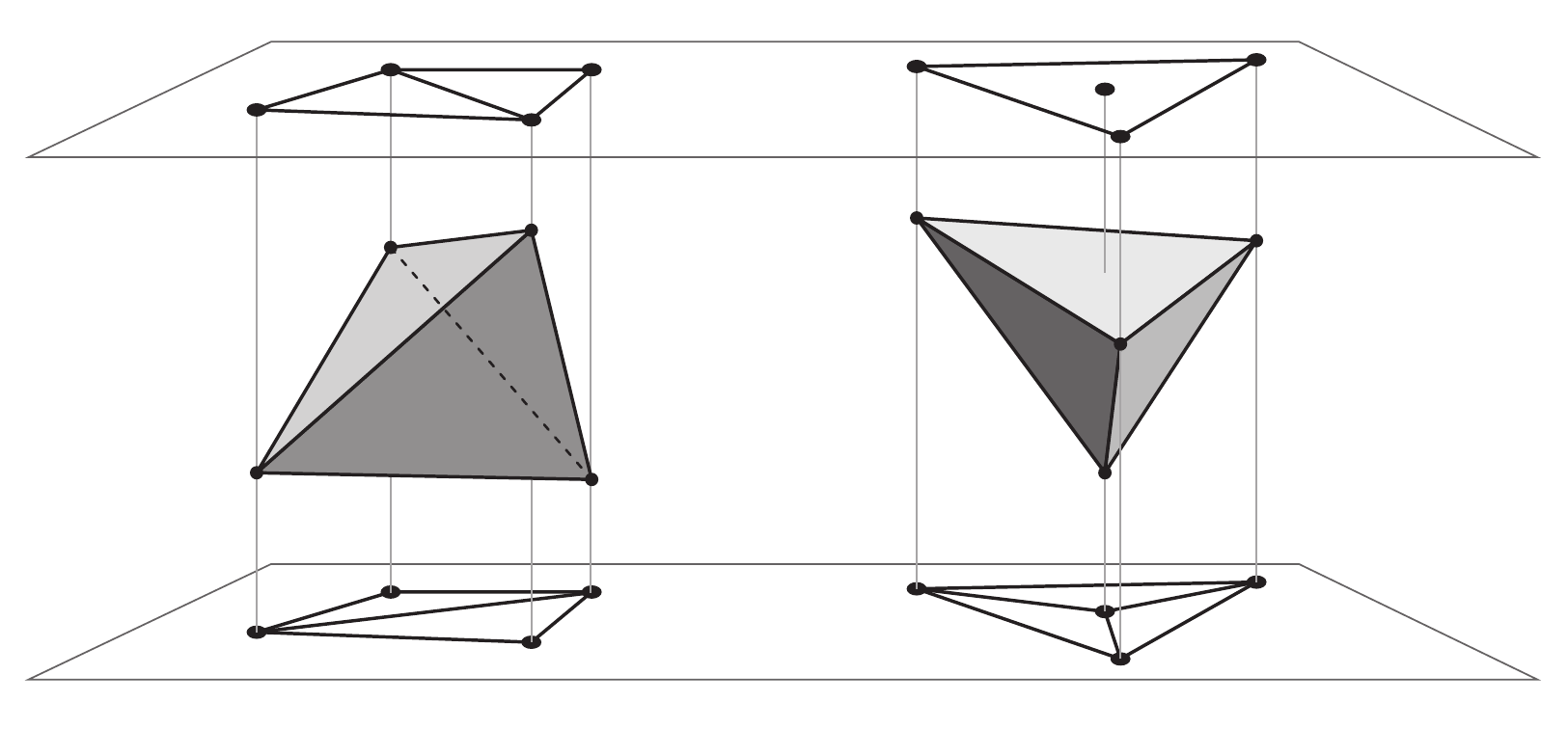}
  \caption{Flips in the plane correspond to the upper and lower facets of a tetrahedron in $\R^3$.}\label{fig:flips}
\end{figure}

Many Delaunay triangulation algorithms use flips as an algorithmic primitive.
This idea goes back to the work of Lawson~\cite{lawson72transforming,lawson77software} and reached its more modern form in the simultaneous papers of Bowyer~\cite{bowyer81computing} and Watson~\cite{watson81computing}.
As the $d$-simplices of the Delaunay triangulation are those for which the circumsphere is empty of other input points, a transformations of the problem into $\R^{d+1}$ can make this criterion linear.
This was first done by Brown~\cite{brown79voronoi} using the stereographic map and later by Edelsbrunner and Seidel~\cite{edelsbrunner86voronoi} using the parabolic lifting.
Adjusting the parabolic lifting can be interpreted as assigning weights to points and leads to \emph{weighted Delaunay triangulations} (also known as \emph{regular} triangulations).
Edelsbrunner and Shah showed that incremental, flip-based algorithms also work in the weighted case~\cite{edelsbrunner96topological} despite obstructions discovered by Joe~\cite{joe89three} to applying Lawson's algorithm in $\R^3$.


The theory of persistent homology applies to settings much more generally than the sublevel sets of distance functions in Euclidean space.
One common setting is to consider points in finite metric spaces.
The complex used for this is called the (Vietoris-)Rips complex.
At scale $\alpha$, it contains a simplex for every clique in the $\alpha$-neighborhood graph of the points.
In order to deal with the size blowup of this complex, several sparsification methods have been proposed~\cite{sheehy13linear,botnan14approximating,cavanna15geometric,dey16simba}.
All of these approaches attempt to use only subsets of the input points as the scale increases.
In this paper, we will show how to adapt this approach to the Delaunay complex.

  \section{Background}
\label{sec:background}

\subsection{Distances, Points, and Weights}

Let $\|a-b\|$ denote the Euclidean distance between $a$ and $b$ in $\R^d$.
Let $\ball(c,r)$ denote the closed ball centered at $c\in \R^d$ with radius $r$.
For a set $P\subset\R^d$, let
\[
  \dist(x, P)= \min_{p\in P} \|x-p\|.
\]
Equivalently, $\dist(x,P)$ is the minimum $r$ such that $P\cap \ball(x,r)$ is nonempty.

The distance function induced by a point set $P$ maps each point $x\in \R^d$ to $\dist(x, P)$.
The sublevel sets of this distance function is often used in topological data analysis as a way to extend a finite set and fill in the space between the points.
These sublevel sets are sometimes called \emph{offsets} and are formally defined for each scale $\alpha$ as
\[
  P^\alpha := \{x\in \R^d \mid \dist(x,P)\le \alpha\} = \bigcup_{p\in P}\ball(p, \alpha).
\]
The Hausdorff distance between two point sets $P$ and $Q$ is defined as
\[
  \dist_H(P,Q) := \max\{\max_{p\in P}\dist(p,Q), \max_{q\in Q}\dist(q,P)\}.
\]
Equivalently, $\dist_H(P,Q)$ is the minimum $r$ such that $P\subseteq Q^r$ and $Q\subseteq P^r$.

One way to modify a distance function and give more or less importance to certain points is to assign a weight to each point.
A weighted point $\hat{p}$ is a point $p\in \R^d$ and a weight $w_p\in \R$.
The \emph{weighted distance to $\hat{p}$} is defined as
\[
  \pdist_p(x) := \sqrt{ \|x-p\|^2 + w_p^2}.
\]

The weighted distance is also called the power distance, especially in the case where one subtracts the weight rather than adds it.
Throughout the paper, we add weights rather than the usual power distance as it substantially simplifies both the conceptual use of weights to decrease the importance (i.e., radius) of some points and also the arithmetic.
All of the constructions we present can be translated into power distances by a global transformation of all the weights and a reinterpretation of the scale.
A similar approach to weighting point may be found in~\cite{buchet15efficient}.

We use the same notation for the weighted distance of a (non-weighted) point $x$ to a weighted point set $\hat{P}$ as we did in the unweighted case.
\[
  \dist(x, \hat{P}) := \min_{\hat{p}\in \hat{P}} \pdist_{\hat{p}}(x)\}.
\]
Thus, unweighted points way be viewed as points with weight zero.
The offsets of $\hat{P}$ are
\[
  \hat{P}^\alpha = \{x\in \R^d \mid \dist(x, \hat{P})\}.
\]

\subsection{Persistent Homology}

A family of subsets $\{X^\alpha\mid \alpha \in \R\}$ of $\R^d$ is called a \emph{filtration} if for all $\alpha \le \beta$, we have $X^\alpha \subseteq X^\beta$.
We will reserve superscripts on sets as a notation for filtration parameters and will denote a filtration $(X^\alpha)$ with parentheses to stress the importance of the ordering.
For filtrations that are defined only over an interval $[s,t]\subset \R$, we will assume that $X^\alpha = X^s$ for $\alpha<s$ and $X^\alpha = X^t$ for $\alpha >t$.

The \emph{persistent homology} of $(X^\alpha)$ is a representation of the changes in the topology of $X^\alpha$ as $\alpha$ varies over $\R$.
The result is a \emph{persisence diagram}, denoted $\dgm(X^\alpha)$, that is a multiset of pairs $(b,d)$ in the extended plane $(\R\cup \infty)^2$.
Each pair $(b,d)$ represents a nontrivial homology class that exists only in $X^\alpha$ for $\alpha$ in the half open interval $[b,d)$.
Thus, $b$ is the \emph{birth time} of a topological feature and $d$ is its \emph{death time}.

Persistent homology is usually computed on combinatorial objects called simplicial complexes.
A \emph{simplicial complex} is a pair of sets $(V,S)$ where $V$ is the vertex set and $S\subseteq \pow(V)$ is the simplex set.
It is required that $S$ be closed under subsets, i.e., if $\sigma\subseteq \tau \in S$, then $\sigma\in S$.
A filtration $(K^\alpha)$ of simplicial complexes is called a \emph{filtered simplicial complex}.
There is a standard way to relate the persistence diagram of subsets of $\R^d$ to the persistence diagram of a filtered simplicial complex.


The main problem addressed in this paper is the efficient approximation of $\dgm(P^\alpha)$ by constructing a linear size filtered simplicial complex based on the Delaunay triangulation.

\subsection{Greedy Permutations}

For ranges of indices, let $[a:b]$ denote $\{a,\ldots b-1\}$ and $[b]$ denote $[0,b]$.
For an ordered set $P$, let $P_i$ denote the $i$th prefix $\{p_j \mid j\in [i]\} = \{p_0, \ldots, p_{i-1}\}$.

A \emph{greedy ordering} or \emph{greedy permutation} of $P$ is defined as follows.
The first point, $p_0$, may be chosen arbitrarily.
The $i$th point, $p_i$ is chosen to be a point that maximizes $\dist(p_i, P_i)$.
That is, each point after the first is the farthest from its predecessors.
Equivalently,
\[
  \dist(p_i, P_i) = \dist_H(P_i, P).
\]

Greedy permutations have been reinvented several times, especially in the context of $k$-center clustering (see Gonzalez~\cite{gonzalez85clustering} or Dyer and Frieze~\cite{dyer85simple}).
Clarkson adapted his $\mathrm{sb}$ nearest neighbor search data structure to compute greedy permutations.
Har-peled and Mendel~\cite{har-peled06fast} showed that Clarkson's approach yields an $O(n\log \spread)$-time algorithm in low-dimensional metric spaces.
They also gave an $O(n\log n)$-time algorithm in such cases.

The \textit{insertion radius} of a point $p_i$ is defined as
\[
  r_i := \dist(p_i, P_i)
\]
The definition of a greedy ordering directly implies that if $i < j$, then $r_i \ge r_j$.

Every prefix $P_i$ of a greedy permutation satisfies both packing and covering conditions in the following sense.
The set $P_i$ is a \emph{$r_{i-1}$-packing}: for every pair of points in $a,b \in P_i$, we have $\|a-b\|\ge r_{i-1}$.
The set $P_i$ is an \emph{$r_i$-covering} of $P$: for every point $a\in P$, there is a point $b\in P_i$ such that $\|a-b\|\le r_i$.

\subsection{Weights and Time}

We will be considering weighted point sets in which the weights vary in time.
For each point $p_i$ in $P$, we will assign a nonnegative weight function $w_i:\R\to \R$.
For a given $\alpha$, the set
\[
  \hat{P}(\alpha) := \{(p_i, w_i(\alpha) \mid p_i \in P)
\]
is a weighted point set in which the weight of $p_i$ is $w_i(\alpha)$.

The weight functions will be defined in terms of the \emph{freezing time} $\lambda_i$ of each point $p_i$.
The exact value chosen for $\lambda_i$ will depend on our desired approximation guarantees and the specifics of the algorithm.
Once the freezing times are fixed, the weights are defined as follows.
\[
  w_i(\alpha) = \begin{cases}
    0 & \text{if }\alpha<\lambda_i\\
    \sqrt{\alpha^2 - \lambda_i^2} &\text{otherwise}\\
    \end{cases}
\]
The impact of this weight function on distances is most clearly seen by considering a ball of radius $\alpha$ at a point.
The squared power distance of a point $x$ to a point $q$ with weight $w$ is $\|x-q\|^2 + w^2$.
It follows that the ball $b_i(\alpha)$ centered at $p_i$ with weight $w_i(\alpha)$ is
\[
  b_i^\alpha = \ball(p_i, \min\{\alpha, \lambda_i\}).
\]
This is why $\lambda_i$ is called the freezing time; at scales $\alpha> \lambda_i$, the Euclidean radius of a ball of weighted radius $\alpha$ will not grow.

The following lemma shows how weighting the points according to the freezing time guarantees that at all scales, there is always a point nearby that is sufficiently close and sufficiently far from its freezing time.
The proof can be found in Appendix~\ref{sec:covering_lemmas}.

\begin{restatable}{lemma}{weightedcover}\label{lem:weighted_cover}
  Let $P\subset \R^d$ be ordered according to a greedy permutation with insertion radii $r_0\ldots r_{i-1}$.
  Let $\e>0$ be any constant.
  For all $j\in [1:n]$, let the freezing times $\lambda_j$ be chosen so that $\lambda_j \ge \frac{1+\e}{\e}r_j$.
  Then, for all $k\in [1:n]$ and all $\alpha\ge 0$, there exists $i$ such that
  \begin{itemize}
    \item $\lambda_i \ge (1+\e)\alpha$, and
    \item $\|p_i - p_k\| < \e \alpha$.
  \end{itemize}
\end{restatable}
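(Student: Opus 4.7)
The plan is to exploit the packing/covering duality of greedy permutations together with the freezing-time hypothesis. The key observation that drives everything is that the hypothesis $\lambda_j \ge \frac{1+\e}{\e} r_j$ can be rephrased as a threshold condition: whenever $r_i \ge \e\alpha$, one automatically gets $\lambda_i \ge \frac{1+\e}{\e}\cdot \e\alpha = (1+\e)\alpha$. So the first conclusion of the lemma reduces to finding a candidate $p_i$ whose insertion radius is at least $\e\alpha$, and the whole problem becomes: produce such a point that is also within $\e\alpha$ of $p_k$.

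With that reduction in hand, I would let $m$ be the largest index with $r_m \ge \e\alpha$, adopting the convention $r_0 = \infty$ (and $\lambda_0 = \infty$) so that $m \ge 0$ is always defined; if every insertion radius meets the threshold, take $m = n-1$. Because the insertion radii are nonincreasing along a greedy permutation, $\{0, 1, \ldots, m\}$ is a genuine prefix and the maximality of $m$ forces $r_{m+1} < \e\alpha$ whenever $m+1 < n$. The argument then splits on whether $p_k$ lies inside this prefix. If $k \le m$, I simply take $i = k$: the distance is $0 < \e\alpha$, and $r_k \ge \e\alpha$ gives $\lambda_k \ge (1+\e)\alpha$ by the key observation. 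If instead $k > m$, I invoke the covering property of greedy prefixes: $P_{m+1}$ is an $r_{m+1}$-covering of $P$, so there exists some $p_i \in P_{m+1}$ with $\|p_i - p_k\| \le r_{m+1} < \e\alpha$. Such an $i$ satisfies $i \le m$, so again $r_i \ge \e\alpha$ and hence $\lambda_i \ge (1+\e)\alpha$.

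I do not expect a serious obstacle here; the proof is a direct unpacking of definitions once one spots the threshold reformulation. The only subtleties worth attention are bookkeeping ones: (i) the hypothesis on $\lambda_j$ is quantified over $j \in [1:n]$, so the index $i = 0$ must be handled by the convention that $p_0$ never freezes; (ii) the strict inequality $\|p_i - p_k\| < \e\alpha$ needed in the conclusion must be chased through the weak covering bound $\|p_i - p_k\| \le r_{m+1}$, which is fine because the defining inequality $r_{m+1} < \e\alpha$ is itself strict; and (iii) the boundary case $\alpha = 0$ should be read as vacuous, since any positive multiple of $\alpha$ collapses and the statement is interesting only for $\alpha > 0$.
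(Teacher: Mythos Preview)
Your proof is correct and follows essentially the same strategy as the paper: pick a prefix of the greedy permutation determined by a scale-dependent threshold, then invoke the covering property of that prefix to locate a nearby point with large freezing time. The only difference is the pivot---you threshold on insertion radii ($r_m \ge \e\alpha$) while the paper thresholds directly on freezing times ($\lambda_j \ge (1+\e)\alpha$); your choice is arguably cleaner, since the $r_i$ are nonincreasing by definition of a greedy permutation, whereas the paper's step ``$\lambda_i \ge \lambda_j$ for $i \le j$'' tacitly relies on monotonicity of the $\lambda_i$ that the lemma's hypothesis alone does not guarantee.
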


If we take the special case of $\alpha = \lambda_k$ in the preceding lemma, we see that $\ball(p_k, \lambda_k)$ is completely covered by $\ball(p_i, (1+\e)\lambda_k)$, where the point $p_i$ is not yet frozen at time $(1+\e)\lambda_k$.
For larger values of $\alpha$, there will always be a point $p_i$ to cover this ball.
This lemma has two important consequences.
First, it implies that we will be able to remove or ignore the point $p_k$ after time $(1+\e)\lambda_k$.
Second, it allows us to relate the offsets of $P$ with the weighted offsets as shown in the following lemma whose proof may be found in Appendix~\ref{sec:covering_lemmas}.

\begin{restatable}[Weighted Offset Interleaving]{lemma}{weightedoffsetinterleaving}\label{lem:weighted_offset_interleaving}
  Let $P\subset \R^d$ be ordered according to a greedy permutation with insertion radii $r_0 \ldots r_{i-1}$.
  Let $\e>0$ be any constant.
  For all $j\in [1:n]$, let the freezing times $\lambda_j$ be chosen so that $\lambda_j \ge \frac{1+\e}{\e}r_j$.
  Then, for all $\alpha\ge 0$,$\hat{P}^{\alpha} \subseteq P^\alpha \subseteq \hat{P}^{(1+\e)\alpha}$.
\end{restatable}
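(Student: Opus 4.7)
The plan is to prove the two inclusions separately, with the first being essentially immediate from the definition of the weights and the second being a direct application of Lemma~\ref{lem:weighted_cover}.

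For the first inclusion $\hat{P}^\alpha \subseteq P^\alpha$, I would unpack the definitions. The paper already observes that the weighted ball around $p_i$ at scale $\alpha$ equals $\ball(p_i, \min\{\alpha, \lambda_i\})$, so the offset decomposes as $\hat{P}^\alpha = \bigcup_i \ball(p_i, \min\{\alpha, \lambda_i\})$. Since $\min\{\alpha,\lambda_i\} \le \alpha$, every such ball sits inside $\ball(p_i,\alpha)$, and taking the union gives $\hat{P}^\alpha \subseteq \bigcup_i \ball(p_i,\alpha) = P^\alpha$.

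For the second inclusion $P^\alpha \subseteq \hat{P}^{(1+\e)\alpha}$, I would take an arbitrary $x \in P^\alpha$ and produce a weighted ball at scale $(1+\e)\alpha$ that contains it. Since $x\in P^\alpha$, there is some $p_k\in P$ with $\|x-p_k\|\le \alpha$. Now apply Lemma~\ref{lem:weighted_cover} at this $k$ and this $\alpha$ to obtain an index $i$ satisfying $\lambda_i \ge (1+\e)\alpha$ and $\|p_i-p_k\| < \e\alpha$. By the triangle inequality,
\[
  \|x-p_i\| \le \|x-p_k\| + \|p_k-p_i\| < \alpha + \e\alpha = (1+\e)\alpha.
\]
The condition $\lambda_i \ge (1+\e)\alpha$ means $p_i$ is not yet frozen at scale $(1+\e)\alpha$, so its weighted ball there has full Euclidean radius $(1+\e)\alpha$. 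Hence $x$ lies in that ball, and therefore in $\hat{P}^{(1+\e)\alpha}$.

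There is no real obstacle here: once Lemma~\ref{lem:weighted_cover} is available, the offset interleaving is a one-line triangle-inequality argument combined with the explicit description of the weighted balls. The only minor care needed is to check the edge case $\alpha=0$ (where $P^0 = P \subseteq \hat{P}^0$ because each point covers itself at scale $0$) and the case where $\alpha$ is so small that no freezing has occurred, but in both cases the balls are simply Euclidean and the containments are trivial.
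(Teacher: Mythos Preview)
Your proposal is correct and follows essentially the same approach as the paper: the first inclusion is immediate from the fact that nonnegative weights only shrink the balls, and the second inclusion is exactly the paper's argument---pick $p_k$ close to $x$, invoke Lemma~\ref{lem:weighted_cover} to find an unfrozen $p_i$ within $\e\alpha$ of $p_k$, and finish with the triangle inequality. The only cosmetic difference is that you phrase the first inclusion via the explicit ball radii $\min\{\alpha,\lambda_i\}$ while the paper phrases it via the weighted distance, but these are the same observation.
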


\subsection{Delaunay and Voronoi}\label{sec:delaunay_and_voronoi}

\begin{figure}[h]
  \centering
  \includegraphics[width=0.29\textwidth]{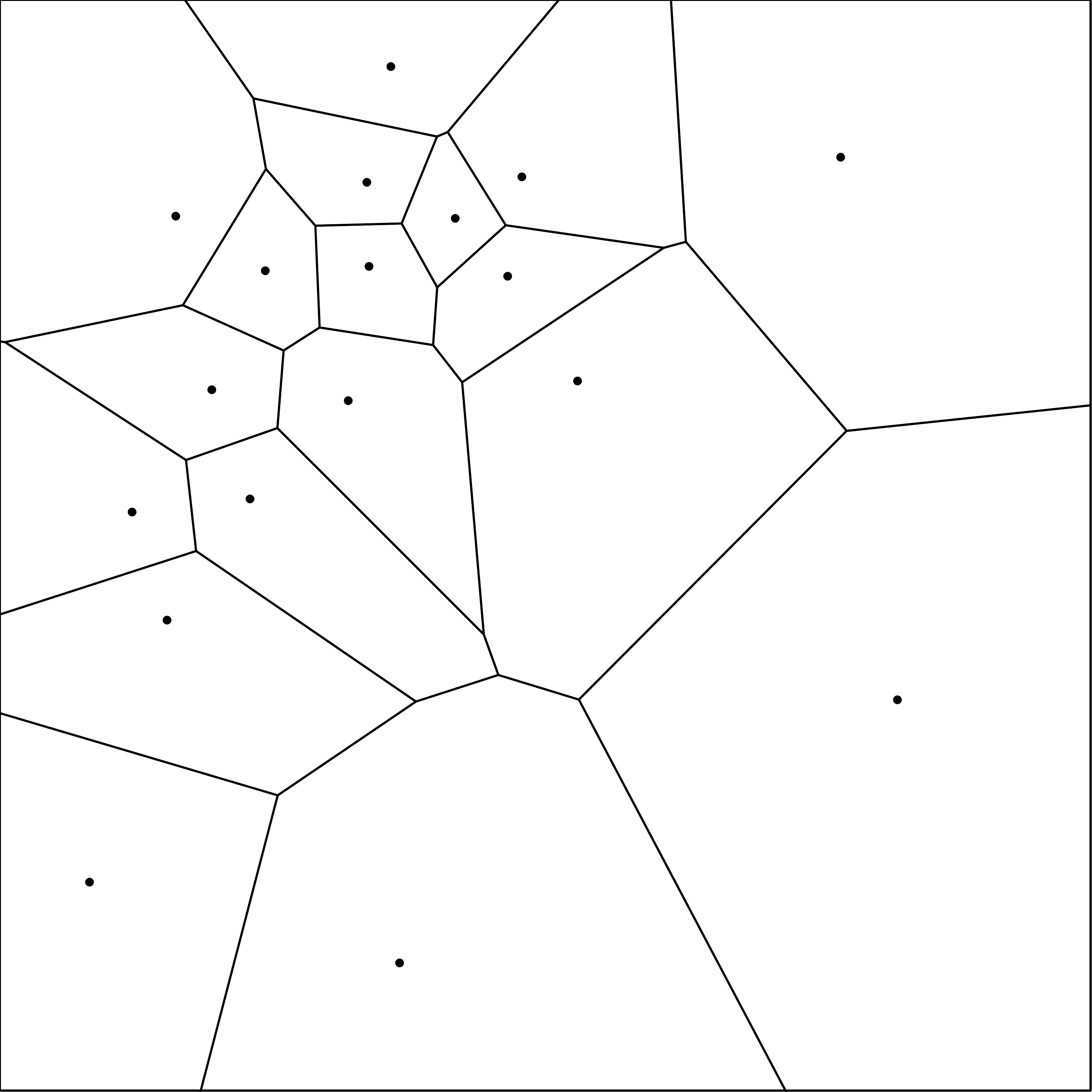}
  \includegraphics[width=0.29\textwidth]{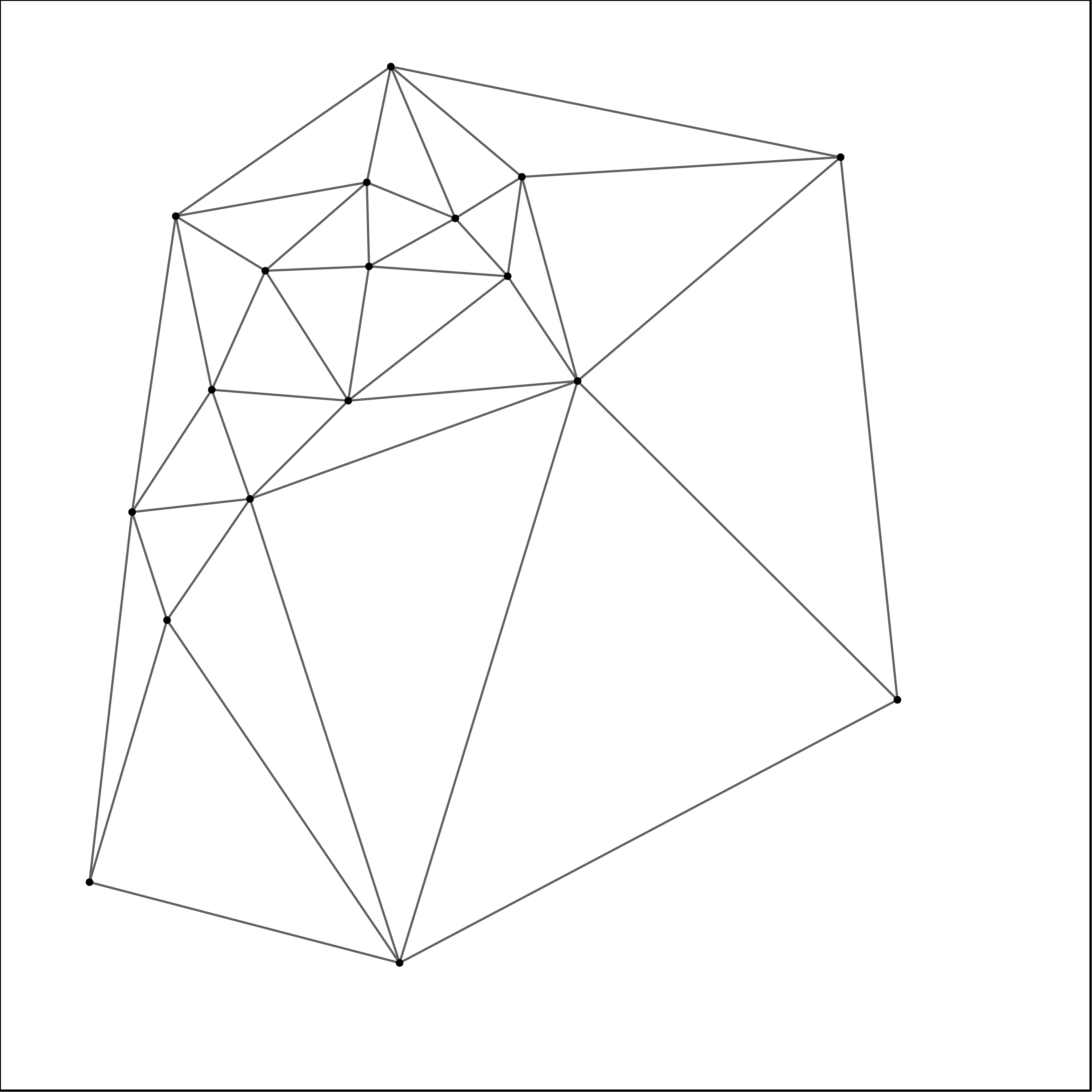}
  \caption{The Voronoi diagram and its dual Delaunay triangulation.}
  \label{fig:algorithm}
\end{figure}

Let $P$ be a set of points in $\R^d$.
The \emph{Voronoi cell} of a point $q\in P$ is defined as
\[
  \vor_P(q) := \{x\in \R^d \mid \|x-q\| = \dist(x, P)\}.
\]
For a subset of points $S\subseteq P$, we can define its Voronoi cell as
\[
  \vor_P(S) := \bigcap_{q\in S} \vor_P(q).
\]
The same definitions apply equally to the case of weighted points.
It is well-known that the Voronoi cells are polyhedra.
The \emph{Voronoi diagram} of $P$ is defined as the polyhedral complex composed of nonempty Voronoi cells $\vor_P(S)$ for all $S\subseteq(P)$.

The \emph{Delaunay complex} (also known with some nuances as the Delaunay triangulation, tesselation, or mosaic) is the simplicial complex formed by the subsets $S\subseteq P$ for which $\vor_P(S)$ is nonempty.
The subsets are the \emph{simplices} and the \emph{dimension} of $S$ is $|S| -1$.
We are defining the Delaunay complex here as an abstract simplicial complex.
In the special case where all simplices have dimension at most $d$, the Delaunay complex will embed neatly into $\R^d$ with the vertices embedded at the points of $P$ and each simplex embedded as the convex closure of its vertices.
For the purposes of this paper we will not need the embedding and thus will have no need for the usual general position conditions as would usually be required for the geometric realization of the complex in $\R^d$.
In fact, we will explicitly construct ``degenerate'' Delaunay complexes because the adjustment of weights over time will necessarily pass through instants where higher dimensional simplices are present in the Delaunay complex.
These are the moments when a flip occurs.
We will only require that at most one flip occurs at a time.


\subsection{The Kinetic View of Flips}

Kinetic data structures~\cite{guibas98kinetic} generalize the classic sweepline approach of Bentley and Ottmann.
The goal is to maintain some geometric structure as points move along trajectories.
The principal technique is to rewrite the geometric predicates defining the structure as functions (usually polynomials) of time and then solving (finding roots) for the time when the predicate will no longer hold.
At that time, some combinatorial change is required.
These changes are stored in a priority queue, ordered by time.

Incremental Delaunay triangulation can be phrased as a kinetic data structures problem if one understand the motion in $d+1$ dimensions as a continuous change in the weight of the point being inserted.
This perspective is often abandoned because the precise ordering of the flips is rarely important and is not necessary for correct computation (see Edelsbrunner~\cite{edelsbrunner96topological}).
However, in our case, we want the precise order and time of the flips that occur, because these inform the final filtration.
Also, we will be adding multiple points at once, so the order becomes more important.

A similar approach was used by Miller and Sheehy~\cite{miller14new} in an output-sensitive algorithm for computing Delaunay triangulations.
In that paper, it was observed that the predicate polynomials are linear for the case where the points are partitioned into two sets, one with weight zero and one with squared weight varying linearly in time.

\subsection{Clipping the Voronoi Diagram}

The \emph{clipped Voronoi cell} is the intersection of a Voronoi cell and a ball.
Let $\hat{P}$ be a weighted point set with weights varying in time as above.
Then, for each $p_i\in \hat{P}$ and each $\alpha \ge 0$ there is a ball
\[
  b_i(\alpha) := \ball(p_i, \min\{\lambda_i, \alpha\}).
\]
Specifically, for $p_i\in P$, we define the clipped Voronoi cell of $p_i$ as
\[
  V_i^\alpha := \vor_{\hat{P}(\alpha)}(p_i) \cap b_i(\alpha).
\]

\begin{figure}
  \centering
  \includegraphics[width=0.19\textwidth]{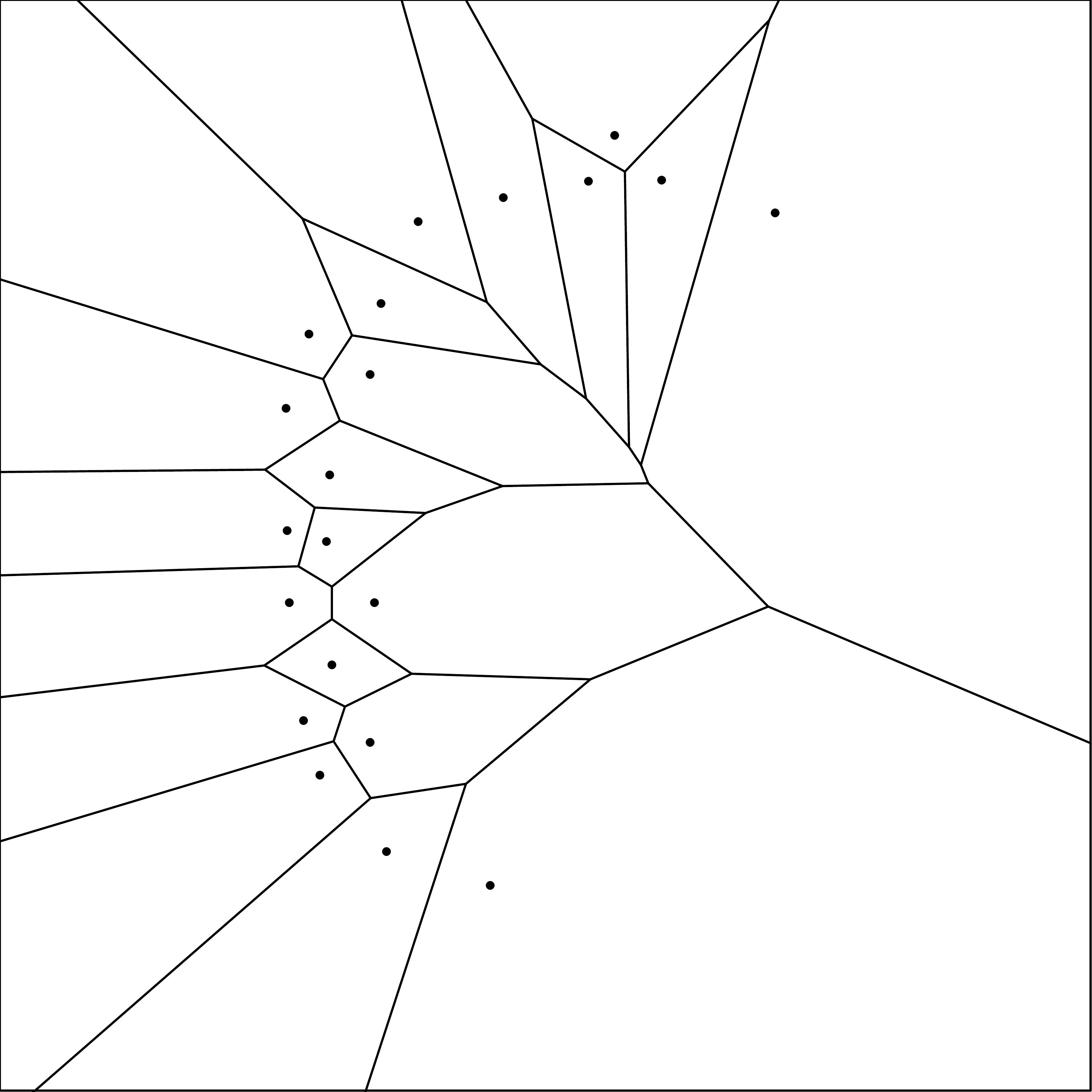}
  \includegraphics[width=0.19\textwidth]{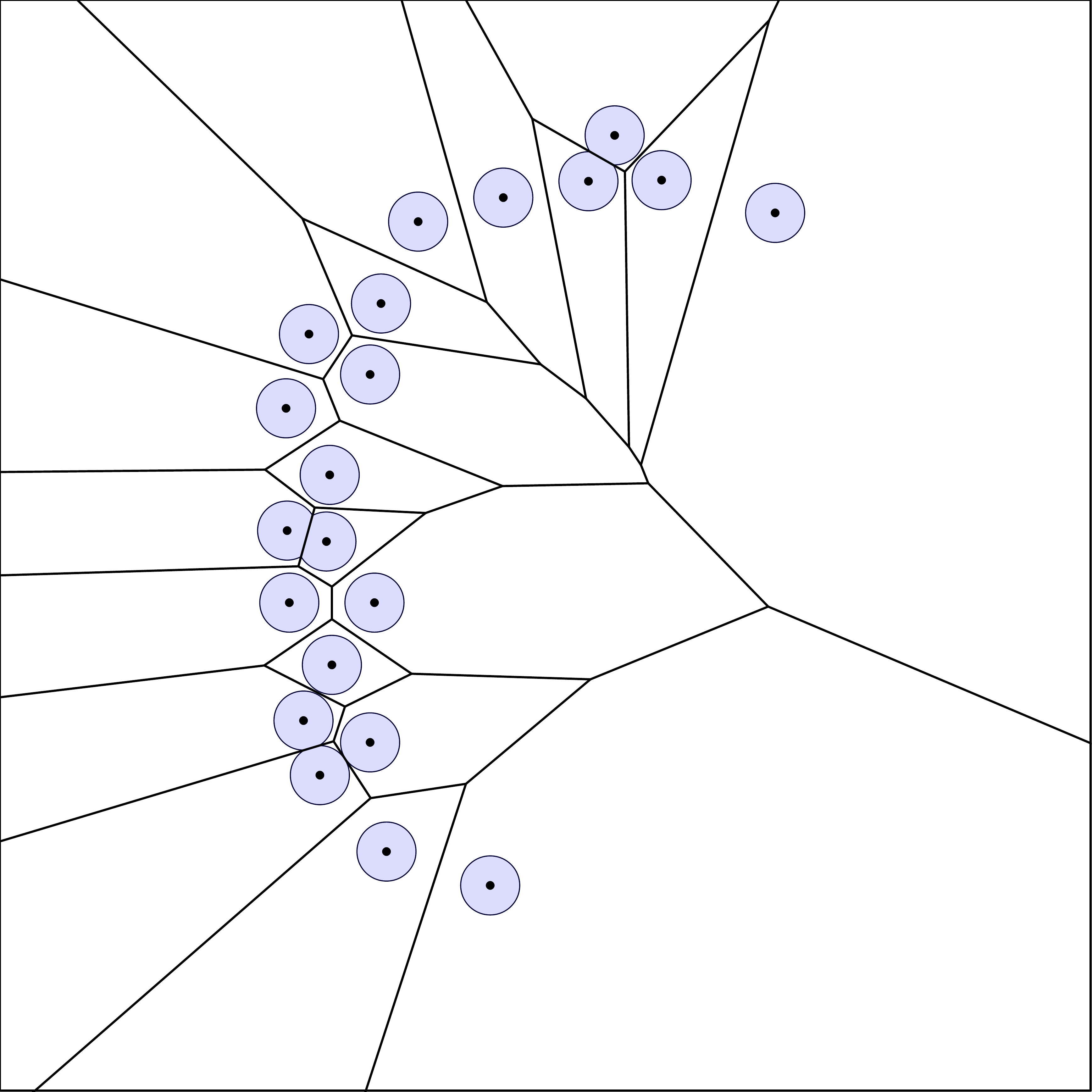}
  \includegraphics[width=0.19\textwidth]{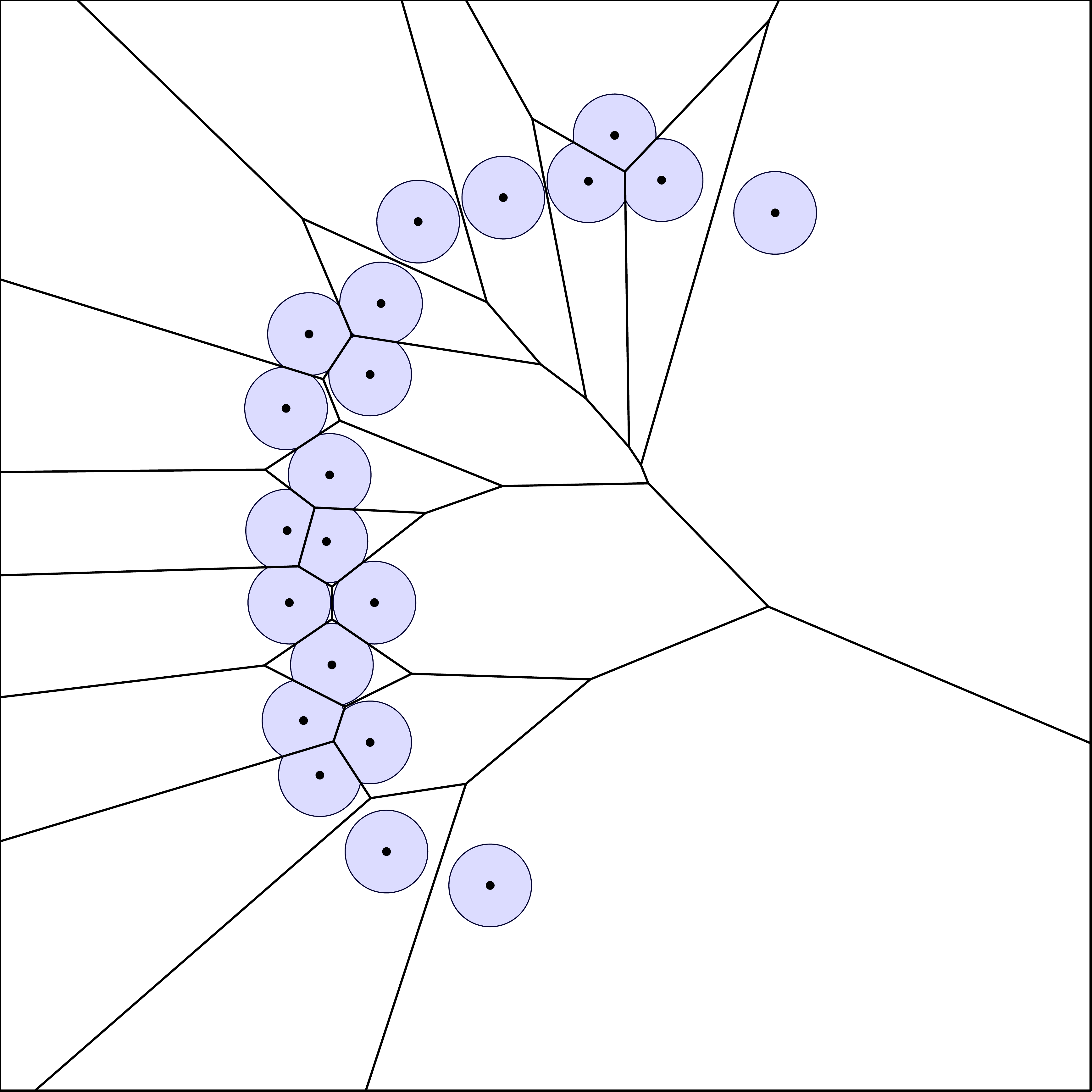}
  \includegraphics[width=0.19\textwidth]{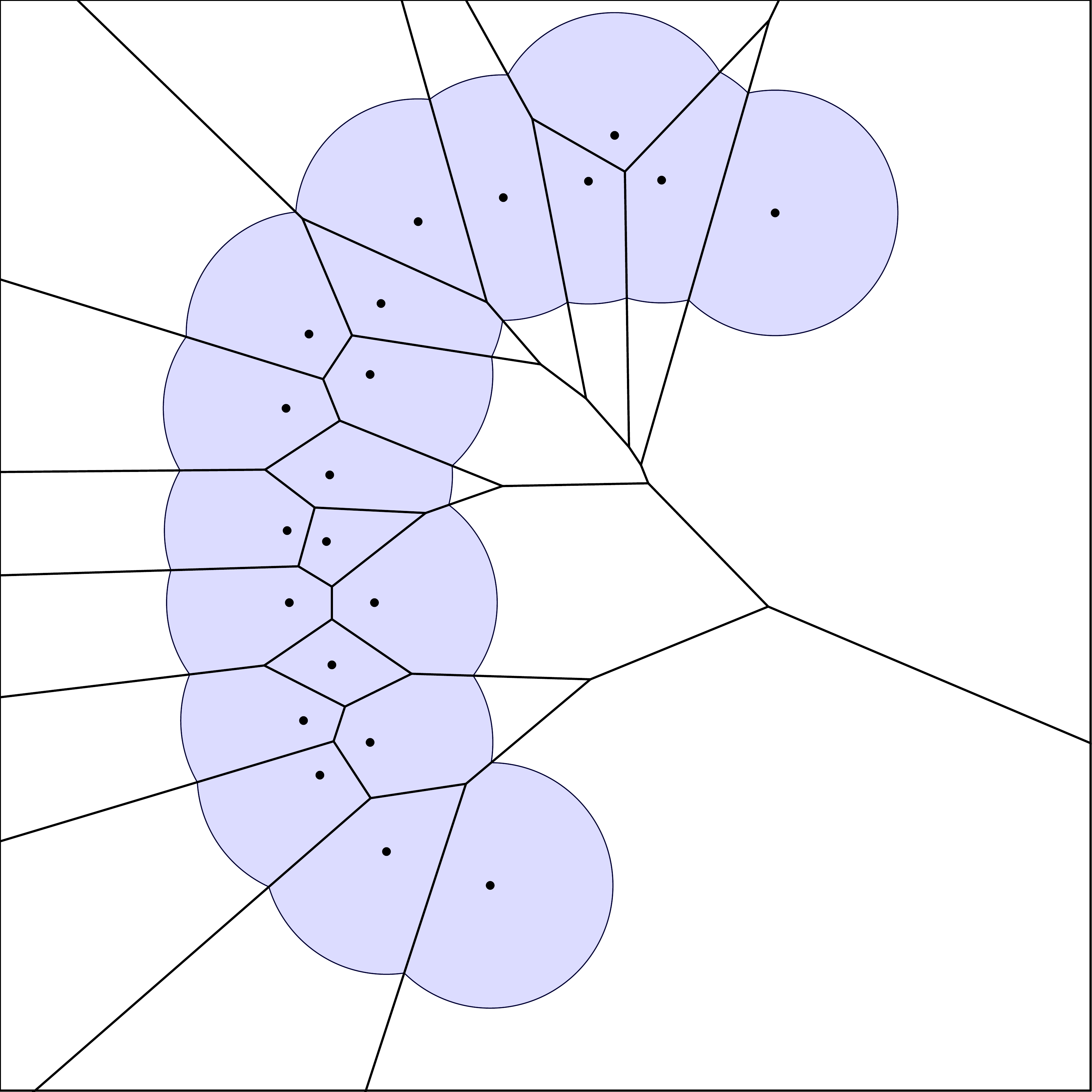}
  \includegraphics[width=0.19\textwidth]{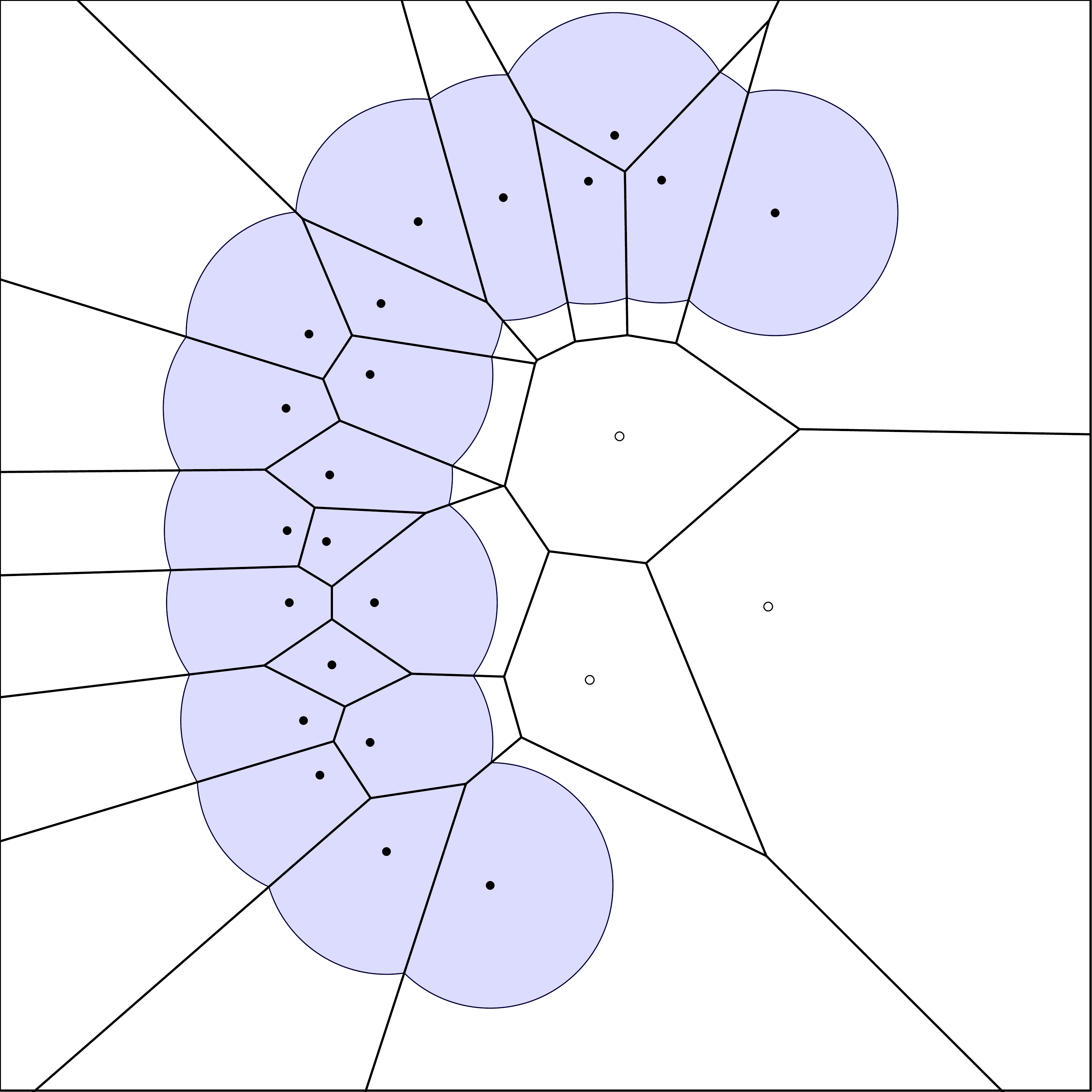}
  \caption{The clipped Voronoi cells exactly cover the offsets.  In the last frame, we illustrate how the addition of extra points may not change the offsets or the Delaunay filtration.  These extra points are used in Section~\ref{sec:algorithm} to keep the complexity linear.}
  \label{fig:algorithm}
\end{figure}

The \emph{Delaunay complex at scale $\alpha$} is the subcomplex of the Delaunay complex defined by using the clipped Voronoi cells instead of the full Voronoi cells.
That is,
\[
  D^\alpha := \{\sigma\subseteq P \mid \bigcap_{p_i\in\sigma} V_i^\alpha \neq \emptyset\}.
\]
For any $\alpha\le \beta$, we have $D^\alpha\subseteq D^\beta$, i.e., $(D^\alpha)$ is a filtration.

By defining the weights as above, we guarantee that for all scales $\alpha\ge (1+\e)\lambda_k$, the clipped Voronoi cell $V_k^\alpha$ will be empty.
This is a direct consequence of Lemma~\ref{lem:weighted_cover}, but we give the formal statement below and the proof in Appendix~\ref{sec:covering_lemmas} for completeness.
See Fig.~\ref{fig:lifted_clipped_cells} for an illustration.

\begin{figure}
  \includegraphics[width=\textwidth]{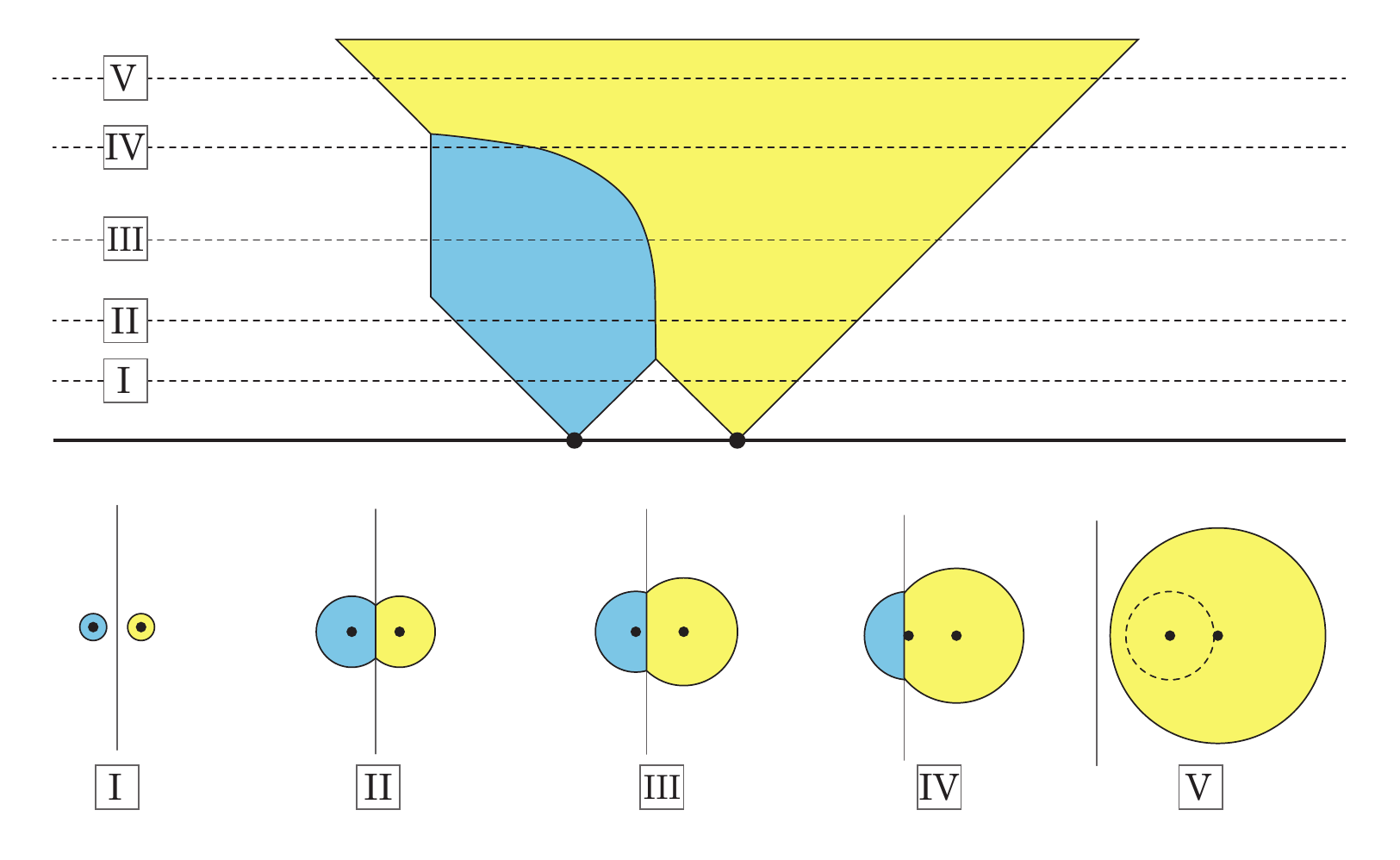}
  \caption{For two points, the lifted, clipped Voronoi cells are shown from the side.  The clipped Voronoi diagrams at five different scales are illustrated.  Note that at some scale, the left ball ceases growing.  Then, it is overtaken by the cell of the right point.}\label{fig:lifted_clipped_cells}
\end{figure}

\begin{restatable}{lemma}{emptyclippedcells}\label{lem:empty_clipped_cells}
  Let $P\subset \R^d$ be ordered according to a greedy permutation with insertion radii $r_0 \ldots r_{i-1}$.
  Let $\e>0$ be any constant.
  For all $j\in [1:n]$, let the freezing times $\lambda_j$ be chosen so that
  \[
    \lambda_j \ge \frac{1+\e}{\e}r_j.
  \]
  Then, for all $k\in [1:n]$ and all $\alpha> (1+\e)\lambda_k$, we have $V_k^\alpha= \emptyset$.
\end{restatable}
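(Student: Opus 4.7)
The plan is to produce, for every $x$ in the (already frozen) ball $b_k(\alpha)=\ball(p_k,\lambda_k)$, a point $p_i$ with $\pdist_{p_i}(x)<\pdist_{p_k}(x)$ at time $\alpha$. Because $\vor_{\hat P(\alpha)}(p_k)$ is by definition the set where $p_k$ attains the minimum weighted distance, any such $x$ lies outside it, so $V_k^\alpha=b_k(\alpha)\cap\vor_{\hat P(\alpha)}(p_k)$ is empty; this will prove the lemma. Note that since $\alpha>(1+\e)\lambda_k>\lambda_k$, the ball $b_k(\alpha)$ has already stopped growing, which is why it suffices to analyze $\ball(p_k,\lambda_k)$.

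First I would obtain the witness $p_i$ by applying Lemma~\ref{lem:weighted_cover} at the auxiliary parameter $\lambda_k$ (in the role of $\alpha$), which supplies an index $i$ with $\lambda_i\ge(1+\e)\lambda_k$ and $c:=\|p_i-p_k\|<\e\lambda_k$ (necessarily $i\ne k$, as otherwise $\e\le 0$). Fix any $x\in\ball(p_k,\lambda_k)$ and set $d:=\|x-p_k\|\le\lambda_k$. Since $\alpha>\lambda_k$, $p_k$ is frozen and
\[
  \pdist_{p_k}(x)^2=d^2+\alpha^2-\lambda_k^2,
\]
while the triangle inequality gives $\|x-p_i\|^2\le(d+c)^2=d^2+2cd+c^2$. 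The key cross-term bound, $2cd+c^2<(2\e+\e^2)\lambda_k^2=((1+\e)^2-1)\lambda_k^2$, then follows from $d\le\lambda_k$ and the strict $c<\e\lambda_k$.

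From here I would split into two cases depending on whether $p_i$ has frozen. If $\alpha\le\lambda_i$, so $w_i(\alpha)=0$, then
\[
  \pdist_{p_k}(x)^2-\pdist_{p_i}(x)^2\;\ge\;\alpha^2-\lambda_k^2-(2cd+c^2)\;>\;\alpha^2-(1+\e)^2\lambda_k^2\;>\;0,
\]
where the last inequality uses $\alpha>(1+\e)\lambda_k$. If instead $\alpha>\lambda_i$, so $w_i(\alpha)^2=\alpha^2-\lambda_i^2$, the $\alpha^2$ terms cancel and
\[
  \pdist_{p_k}(x)^2-\pdist_{p_i}(x)^2\;\ge\;(\lambda_i^2-\lambda_k^2)-(2cd+c^2)\;\ge\;(2\e+\e^2)\lambda_k^2-(2cd+c^2)\;>\;0,
\]
using $\lambda_i\ge(1+\e)\lambda_k$ for the middle step and the cross-term bound for the last. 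Either way $\pdist_{p_i}(x)<\pdist_{p_k}(x)$, which is what was needed.

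There is no deep obstacle here; the only subtlety is that in the second case the two quantities being subtracted are tight at $(2\e+\e^2)\lambda_k^2$, so strict separation must come from the strict inequality $c<\e\lambda_k$ in Lemma~\ref{lem:weighted_cover} rather than from any additional slack in the hypothesis on $\alpha$. This is also what makes the freezing-time constant $\frac{1+\e}{\e}$ so natural: it is exactly the amount needed for the weight deficit $\lambda_i^2-\lambda_k^2$ to absorb the off-center displacement $c$ while leaving room for a strict weighted-distance gap even at the boundary $d=\lambda_k$.
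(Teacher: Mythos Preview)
Your proof is correct and follows essentially the same route as the paper: invoke Lemma~\ref{lem:weighted_cover} at scale $\lambda_k$ to obtain a witness $p_i$ with $\lambda_i\ge(1+\e)\lambda_k$ and $\|p_i-p_k\|<\e\lambda_k$, then compare $\pdist_{p_i}(x)$ and $\pdist_{p_k}(x)$ via the triangle inequality and the weight formula. Your explicit case split on whether $p_i$ is already frozen at time $\alpha$ is in fact more careful than the paper's own write-up, which asserts $w_i(\alpha)=0$ yet proceeds with the expression $\|p_i-x\|^2+\alpha^2-\lambda_i^2$; your treatment handles both possibilities cleanly and makes transparent where the strict inequality $c<\e\lambda_k$ is needed.
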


This is how some points will cease to impact the filtration at larger scales.
In the next section, we will see how to simulate the removal of vertices whose clipped Voronoi cells are empty.
Before we give that construction, we will relate the clipped Voronoi diagram to the (weighted) offsets.
For completeness, we include a proof of this well-known fact (see for example Edelsbrunner~\cite{edelsbrunner95union}) in Appendix~\ref{sec:covering_lemmas}.

\begin{restatable}{lemma}{clippedcellscoveroffsets}\label{lem:clipped_cells_cover_offsets}
  Let $P\subset \R^d$ be ordered according to a greedy permutation with insertion radii $r_0 \ldots r_{i-1}$.
  Let $\e>0$ be any constant.
  For all $j\in [1:n]$, let the freezing times $\lambda_j$ be chosen so that
  \[
    \lambda_j \ge \frac{1+\e}{\e}r_j.
  \]
  Then, for all $\alpha\ge 0$,
  \[
    \hat{P}^\alpha = \bigcup{i\in [n]}V_i^\alpha.
  \]
\end{restatable}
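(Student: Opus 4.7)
The plan is to prove equality by showing containment in each direction, with the main observation being that the weighted ball $b_i(\alpha)$ is exactly the sublevel set of the weighted distance $\pdist_{p_i}$ at level $\alpha$. Specifically, unpacking the definition of $w_i(\alpha)$, one checks that $\pdist_{p_i}(x) \le \alpha$ iff $\|x-p_i\|^2 \le \alpha^2 - w_i(\alpha)^2 = \min\{\alpha, \lambda_i\}^2$, so that $b_i(\alpha) = \{x \in \R^d : \pdist_{p_i}(x) \le \alpha\}$. Consequently, the weighted offset splits as $\hat{P}^\alpha = \bigcup_{i\in[n]} b_i(\alpha)$.

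The easy direction is $\bigcup_{i\in[n]} V_i^\alpha \subseteq \hat{P}^\alpha$: by definition $V_i^\alpha \subseteq b_i(\alpha) \subseteq \hat{P}^\alpha$, so this is immediate.

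For the reverse direction, I would take any $x \in \hat{P}^\alpha$ and let $p_i$ be a point of $\hat{P}(\alpha)$ minimizing $\pdist_{p_i}(x)$; then $x \in \vor_{\hat{P}(\alpha)}(p_i)$ by definition of the (weighted) Voronoi cell. Since $x \in \hat{P}^\alpha$, there is some index $j$ with $x \in b_j(\alpha)$, which by the observation above means $\pdist_{p_j}(x) \le \alpha$. By the choice of $i$, we get $\pdist_{p_i}(x) \le \pdist_{p_j}(x) \le \alpha$, and hence $x \in b_i(\alpha)$. Combining these, $x \in \vor_{\hat{P}(\alpha)}(p_i) \cap b_i(\alpha) = V_i^\alpha$, which gives the claim.

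There is no real obstacle here beyond carefully identifying the weighted ball $b_i(\alpha)$ with the sublevel set of the weighted distance; once this identification is made, the argument is a one-line use of the nearest-neighbor property of Voronoi cells. Lemma~\ref{lem:weighted_cover} and Lemma~\ref{lem:empty_clipped_cells} are not needed for this statement --- the covering property of clipped cells over the weighted offsets is purely a consequence of the geometry of power diagrams and does not rely on the specific choice of freezing times.
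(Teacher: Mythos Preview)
Your argument is correct and matches the paper's proof essentially line for line: both directions are handled the same way, with $V_i^\alpha \subseteq b_i(\alpha)$ giving one inclusion and the nearest weighted point giving the other. Your added remark that the greedy-permutation and freezing-time hypotheses are unused is also correct; the paper states them only for uniformity with the surrounding lemmas.
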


\subsection{Extending Voronoi and Delaunauy in Time and Space}

Given that we will be considering weighted point sets in which the weights of the points vary in time, it is useful to give a concrete geometric structure that captures the evolution of the Voronoi diagram and the Delaunay triangulation.

Define
\[
  \hat{P}_+^\alpha := \{(x,\gamma) \in \R^d \times [0,\alpha] \mid x\in \hat{P}^\alpha\}
  .
\]
This filtration has the property that for any $\alpha\ge 0$, there is a natural homotopy equivalence $m:\hat{P}_+^\alpha \to \hat{P}^\alpha$ defined by the projection $m(x,\gamma) = x$.
This follows from the fact that $\hat{P}\alpha$ only grows with increasing $\alpha$ and thus the fibers of $m$ are simply connected (line segments).


We can similarly embed the clipped Voronoi cells in $\R^{d+1}$ by defining
\[
  V_{+i}^\alpha := \{(x,\gamma)\in \R^d\times [0,\alpha] \mid x\in V_i^\alpha\}.
\]
The collection of these sets is
\[
  V_+^\alpha := \{V_{+i}^\alpha\}_{i\in [n]}.
\]

Let $D_+^\alpha$ be the union of the Delaunay complexes at time $\gamma$ for all $\gamma\in [0,\alpha]$.

\subsection{Nerves}

Construction of the Delaunay triangulation from the Voronoi cells is an example of a \emph{nerve}.
More generally, given a collection $U$ of sets, we define a simplicial complex
\[
  \nerve(U) := \{\sigma\subset U \mid \bigcap_{S\in \sigma} S \neq \emptyset\}.
\]
For the Delaunay triangulation, if there is a nonempty intersection of Voronoi cells, we identify the corresponding simplex with the set of points defining those cells.
So, $D^\alpha$ is (isomorphic to) the nerve of $V^\alpha$ and $D_+^\alpha$ is the nerve of $V_+^\alpha$.

The collection $U$ is called a \emph{cover}, and the set $W = \bigcup_{S\in U} S$ is the set that $U$ covers.
A cover is a \emph{good cover} if the intersections of elements are all either empty or contractible.
In the case of the Voronoi diagram as well as the clipped Voronoi diagram, the convexity of the cells guarantees that the cover is good.

The \emph{Nerve Theorem} says that the nerve of a good cover is homotopy equivalent to the union, so the homology of the nerve matches the homology of the union.
In the case of the clipped Voronoi diagram, the Nerve Theorem implies that the Delaunay complex at scale $\alpha$ is equivalent in homology to the weighted offsets.
This observation was one of the main ideas that drove the development of persistent homology.

If, instead of a collection of sets, we have a collection of filtrations, we can define their nerve as a filtered simplicial complex.
For example, if we have filtrations $\{(U_0^\alpha), \ldots, (U_{k-1}^\alpha)\}$, then for each $\alpha$ we can define the set $W^\alpha = \bigcup_{i\in [k]}U_i^\alpha$ and the nerve $N^\alpha = \nerve(\{U_i^\alpha \mid i \in [k]\})$.
Then, we have a new filtration $(W^\alpha)$ and the filtered simplicial complex $(N^\alpha)$.

The Persistent Nerve Lemma~\cite{chazal08towards} implies that if $\{U_i^\alpha \mid i \in [k]\}$ is a good cover for all $\alpha$, then the persistence diagrams of $(N^\alpha)$ and $(W^\alpha)$ are identical.

%

  \section{The Sparse Delaunay Filtration}
\label{sec:correctness}

In this section, we will prove that our sparse Delaunay filtration $\dgm(D_+^\alpha)$ is a good approximation to $\dgm(P^\alpha)$.
%
The high-level plan for the analysis is to relate $\dgm(P^\alpha)$ to $\dgm(\hat{P}^\alpha)$ to  $\dgm(\bigcup V_+^\alpha)$ to $\dgm(D_+^\alpha)$.
We have already shown the first two steps follow from standard geometric arguments.
For the last step, we will use the Persistent Nerve Lemma, but it will require showing that lifted clipped Voronoi cells form a good cover.
Although the slices of these cells are convex at any fixed $\alpha$, they are not themselves convex as can be seen in the example in Fig.~\ref{fig:lifted_clipped_cells}.
So, proving its a good filtered cover will depend on our careful choice of weights.


\subsection{Adding Points in Waves}

To define the weight functions, it suffices to establish the freezing times.
The specific choice of the freezing times impacts the size (larger freezing times yields a larger filtration), but also the correctness (the cover must be good).
For many of the preceding lemmas, it was necessary to choose freezing times so that
\[
  \lambda_i \ge \frac{1+\e}{\e}r_i,
\]
where $r_i$ s the insertion radius of $p_i$ in a greedy ordering of the input set $P$.
Recall that $\e$ is the user chosen parameter that will define the accuracy of our approximation.

We will satisfy this requirement by grouping points according to their insertion radius, rounding to the nearest power of $1+\e$.
All the points in a group will have the same freezing time.
Formally, we set
\[
  \lambda_i := (1+\e)^{\left\lceil \log_{1+\e} \left(\frac{1+\e}{\e}r_i\right)\right\rceil}.
\]

The filtration $(D_+^\alpha)$ defined with these weight functions on a greedy permutation is \emph{The Sparse Delaunay Filtration}.

\subsection{Monotonicity}

The first step in showing that $V_+^\alpha$ is a good filtered cover is to show that every Delaunay simplex in $D^\alpha$ appears and disappears at most once.
That is, if we watch the evolution of the $d$-dimensional weighted Delaunay complex $D^\alpha$ as $\alpha$ increases, then no simplex that leaves will ever come back.

There are two ways that a simplex can be removed at time $\alpha$.
First, it may be that its Voronoi cell becomes empty and therefore it is removed from the Delaunay triangulation entirely.
This is the standard case to analyze in flip-based Delaunay computation.
Second, it may be that only the clipped Voronoi cell becomes empty.
In this case, we must show that it remains empty for the rest of the filtration.

The challenge is that the lifted Voronoi cells are not convex.
Aurenhammer et al.~\cite{aurenhammer17voronoi} showed that this monotonicity does not hold for a related set of weight functions.
That paper addressed the problem of computing the Voronoi diagram of parallel halflines and showed that the slices perpendicular to the halflines are weighted Voronoi diagrams.
Similar to the current paper, they construct a $d+1$-dimensional decomposition by sweeping through $d$-dimensional slices.
That paper gives an example attributed to Peter Widmayer's research group in which a particular triangle would be flipped out and later flipped back in.
Such a non-monotone example highlights the need to be careful in choosing the weights.

\begin{lemma}\label{lem:monotonicity}
  If $\sigma\in D^\alpha$ and $\sigma\in D^\beta$, then $\sigma\in D^\gamma$ for all $\gamma\in [\alpha,\beta]$.
\end{lemma}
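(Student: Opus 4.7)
The plan is to exploit a lifting to $\R^{d+1}$ and translate $\sigma\in D^\gamma$ into a feasibility problem that is piecewise affine in the parameter $t:=\gamma^2$, so that per-phase convexity together with continuity across phases yields the interval property.

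Lift each input point to $\tilde p_i(\gamma):=(p_i,w_i(\gamma))\in\R^{d+1}$; then $\pdist_i(x)=\|(x,0)-\tilde p_i(\gamma)\|$, and $\sigma\in D^\gamma$ is equivalent to the existence of $x\in\R^d$ and a radius $R\le\gamma$ with each $\tilde p_i(\gamma)$, $i\in\sigma$, on the sphere $\partial\ball((x,0),R)$ and no $\tilde p_l(\gamma)$ in the open ball. Setting $s=R^2$ and using $w_i^2(t)=\max(0,t-\lambda_i^2)$, this system reads
\begin{align*}
  \|x-p_i\|^2+w_i^2(t)&=s&&(i\in\sigma),\\
  \|x-p_l\|^2+w_l^2(t)&\ge s&&(l\notin\sigma),\\
  s&\le t.
\end{align*}

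Call a \emph{phase} a maximal interval of $t$ disjoint from every $\lambda_i^2$; within a phase each $w_i^2$ is affine in $t$. Subtracting the equality constraints for pairs in $\sigma$ cancels the quadratic $\|x\|^2$ term and leaves linear equations in $(x,t)$. Using any one equality to eliminate $s$, the remaining Voronoi inequalities $\psi_l\ge s$ become affine in $(x,t)$, while the clipping becomes the globally convex constraint $\|x-p_{i_0}\|^2\le\min(t,\lambda_{i_0}^2)$. So within each phase the feasibility region in $(x,t)$ is a closed convex set, and its projection to the $t$-axis is an interval.

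To conclude, glue across phase boundaries. All $w_i^2$ are continuous, so the feasibility region is closed and the per-phase convex regions agree on their shared slice $\{t=\lambda_k^2\}$. Hence the $t$-projection of the global feasibility region is a union of closed intervals, and the pair of adjacent intervals at a boundary $t_0=\lambda_k^2$ either abut or have a gap precisely when the feasibility slice at $t_0$ is empty, i.e.\ when $\sigma\notin D^{\sqrt{t_0}}$. The lemma therefore reduces to ruling out such gaps: if $\sigma$ is feasible at times $\alpha<\beta$, then $\sigma$ is feasible at every phase boundary in between.

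The main obstacle is this last step, and it is genuine: the example attributed to Widmayer earlier in Section~\ref{sec:correctness} shows that for a slightly different family of weight functions the analogous simplex \emph{can} be flipped out and flipped back in. The key input that rules this out here is the very specific form of $w_i^2$: a single flat piece followed by a single linear piece of positive slope. At a boundary $t_0=\lambda_k^2$ only $w_k^2$ changes form, and its slope only increases. A careful case analysis on whether $k\in\sigma$, and on which other points are frozen at $t_0$, shows that no constraint that is slack on both sides of $t_0$ can become tight exactly at $t_0$; equivalently, the two per-phase convex regions must meet along a nonempty common slice whenever each extends arbitrarily close to $t_0$. Combined with the per-phase interval property and the hypothesis $\sigma\in D^\alpha\cap D^\beta$, this yields $\sigma\in D^\gamma$ throughout $[\alpha,\beta]$.
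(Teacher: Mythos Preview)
Your per-phase argument is correct and is essentially the same interpolation the paper uses: once every $w_j^2$ is affine in $t=\gamma^2$, the feasibility region in $(x,t)$ is convex and the witness $z=(1-t)x+ty$ lies in $\bigcap_{p_i\in\sigma}V_i^\gamma$. The gap is exactly where you say it is---the gluing across phase boundaries. You acknowledge this is ``the main obstacle'' and ``genuine'', cite the Widmayer counterexample, and then dispose of it by asserting that ``a careful case analysis'' on whether $k\in\sigma$ works. That analysis is never carried out, and nothing in your setup explains why the specific shape of $w_i^2$ (flat then slope one) prevents the feasibility region from having an empty slice at some $t_0=\lambda_k^2$ strictly between $\alpha^2$ and $\beta^2$.

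The paper does not glue at all; it removes the need to. Let $\lambda_\sigma=\min_{p_i\in\sigma}\lambda_i$. For $\gamma\le\lambda_\sigma$ every vertex of $\sigma$ has weight zero while all other weights are nondecreasing, so each $V_i^\gamma$ with $p_i\in\sigma$ is monotone in $\gamma$; hence if $\alpha<\lambda_\sigma$ one may replace $\alpha$ by $\lambda_\sigma$. On the other side, Lemma~\ref{lem:empty_clipped_cells} gives $V_k^\beta=\emptyset$ whenever $\beta>(1+\e)\lambda_k$, so $\sigma\in D^\beta$ forces $\beta\le(1+\e)\lambda_\sigma$. Thus it suffices to treat $\alpha,\beta\in[\lambda_\sigma,(1+\e)\lambda_\sigma]$. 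Now the wave construction---every freezing time is a power of $(1+\e)$---guarantees that \emph{no} $\lambda_j$, for any $p_j\in P$, lies strictly inside this interval. Consequently every $w_j^2$ is exactly affine in $\gamma^2$ on the whole of $[\lambda_\sigma,(1+\e)\lambda_\sigma]$, the single-phase convexity applies across the entire range, and there are no phase boundaries to cross. The missing idea is therefore not a case analysis on the kink in $w_k^2$, but the combination of Lemma~\ref{lem:empty_clipped_cells} with the wave discretization of the $\lambda_j$ to confine the lifetime of $\sigma$ to one phase.
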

\begin{proof}
  Let $\lambda_\sigma =  \max_{p_i\in \sigma}\lambda_i$.
  If $\beta\le\lambda_\sigma$, then $\bigcap_{p_i\in \sigma}V_i^\alpha \subseteq \bigcap_{p_i\in \sigma}V_i^\gamma$ for all $\gamma\in [\alpha, \beta]$, because the points of $\sigma$ all have weight $0$ in this range.
  Lemma~\ref{lem:empty_clipped_cells} implies that no point $p_i$ appears in $D^\alpha$ for $\alpha\ge (1+\e)\lambda_i$.
  So, it will suffice to prove the lemma for the case where $\alpha$ and $\beta$ are in the interval $[\lambda_\sigma, (1+\e)\lambda_\sigma]$.

  For any $\gamma\in [\alpha,\beta]$, let
  \[
    t = \frac{\gamma^ - \alpha^2}{\beta^2 - \alpha^2}.
  \]
  This choice implies that
  \[
    \gamma = \sqrt{(1-t)\alpha^2 + t \beta^2}.
  \]
  Moreover, for all $p_i\in P$, we have
  \[
    w_i(\gamma)^2 = (1-t)w_i(\alpha)^2 + tw_i(\beta)^2.
  \]

  Let $x$ be a point in the intersection $\bigcap_{p_i\in \sigma} V_i^\alpha$ and let $y$ be a point in $\bigcap_{p_i\in \sigma} V_i^\beta$.
  These points witness the existence of $\sigma$ in $D^\alpha$ and $D^\beta$ respectively.
  We will show that $z = (1-t)x + ty$ is in $\bigcap_{p_i\in \sigma} V_i^\gamma$.

  Using the convexity of the squared distance, we observe that for all $p_i\in \sigma$
  \begin{align*}
    \pdist_{i, \gamma}(z)^2
      &= \|p_i - z\|^2 + w_i(\gamma)^2\\
      &= \|p_i - z\|^2 + (1-t)w_i(\alpha)^2 + tw_i(\beta)^2\\
      &\le (1-t)\|p_i-x\|^2 + t\|p_i - y\| + (1-t)w_i(\alpha)^2 + tw_i(\beta)^2\\
      &= (1-t)\pdist_{i,\alpha}(x) + t\pdist_{i_\beta}(y)\\
      &\le (1-t)\alpha^2 + t\beta^2\\
      &= \gamma.
  \end{align*}
  So, $z\in b_i(\gamma)$ for all $p_i\in \sigma$.

  Next, we show that $z\in \vor_P^\gamma(p_i)$ for all $p_i\in \sigma$.
  By the definition of the power distance,
  \[
    \pdist_{i,\alpha}(x) \le \pdist_{j,\alpha}(x) \text{ iff } 2x^\top(p_j-p_i)\le w_j(\alpha)^2 - w_i(\alpha)^2.
  \]
  Similarly,
  \[
    \pdist_{i,\beta}(y) \le \pdist_{j,\beta}(y) \text{ iff } 2y^\top(p_j-p_i)\le w_j(\beta)^2 - w_i(\beta)^2.
  \]
  So, for all $p_i\in \sigma$ and all $p_j\in P$,
  \begin{align*}
    2z^\top (p_j - p_i)
      &= (1-t)2x^\top (p_j - p_i) + t2y^\top(p_j - p_i)\\
      &\le (1-t)(w_j(\alpha)^2 - w_i(\alpha^2)) + t(w_j(\beta)^2 - w_i(\beta^2))\\
      &= w_j(\gamma)^2 - w_i(\gamma)^2.
  \end{align*}
  So, for all such $p_i,p_j$, we have $\pdist_{i,\gamma}(z)\le \pdist_{j,\gamma}(z)$ and thus $z\in \vor_P^\gamma(p_i)$.

  As we have shown $z\in b_i^\gamma$ and $z\in \vor_P^\gamma(p_i)$, it follows that $z\in b_i^\gamma \cap \vor_P^\gamma(p_i) = V_i^\gamma$ for all $p_i\in \Sigma$ and $z\in \bigcap_{p_i\in \sigma}V_i^\gamma$.
  Therefore, $\sigma\in D^\gamma$ as desired.
\end{proof}

\subsection{A Good Filtered Cover}

We can now prove that the lifted Voronoi cells form a good filtered cover.

\begin{lemma}\label{lem:good_cover}
  Let $\e>0$ be any constant.
  Let $V_+^\alpha$ be the lifted Voronoi cells whose nerve is the $\e$-Sparse Delaunay Filtration for $P\subset \R^d$.
  Then, $(V_+^\alpha)$ is a good filtered cover of $(\hat{P}^\alpha)$.
\end{lemma}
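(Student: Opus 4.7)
The plan is to verify two ingredients separately: that each $V_+^\alpha$ covers $\hat{P}_+^\alpha$, and that every nonempty intersection of lifted clipped Voronoi cells is contractible. Filtration monotonicity $V_{+i}^\alpha \subseteq V_{+i}^\beta$ for $\alpha \le \beta$ is immediate, since increasing $\alpha$ only extends the admissible range of the time coordinate.

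For the covering property, I would slice in the time coordinate. For each $\gamma \in [0,\alpha]$, the $\gamma$-slice of $V_{+i}^\alpha$ is exactly $V_i^\gamma$, so the $\gamma$-slice of $\bigcup_i V_{+i}^\alpha$ is $\bigcup_i V_i^\gamma$. By Lemma~\ref{lem:clipped_cells_cover_offsets}, this equals $\hat{P}^\gamma$, which agrees slice-by-slice with $\hat{P}_+^\alpha$.

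The substantive step is contractibility of nonempty intersections $K_\sigma := \bigcap_{p_i\in\sigma} V_{+i}^\alpha$. My approach is to exhibit an explicit homeomorphism from $K_\sigma$ onto a convex set via the reparameterization
\[
  \Phi \colon \R^d \times [0,\alpha] \to \R^d \times [0,\alpha^2], \qquad \Phi(x,\gamma) := (x, \gamma^2),
\]
which is a homeomorphism with inverse $(x,s) \mapsto (x,\sqrt{s})$. The claim is that $\Phi(K_\sigma)$ is convex. The proof of Lemma~\ref{lem:monotonicity} does more than assert that $\sigma$ is present at intermediate times: given $(x,\alpha') \in K_\sigma$ and $(y,\beta') \in K_\sigma$, it constructs, for each $t \in [0,1]$, a witness $(z_t, \gamma_t) \in K_\sigma$ with $z_t = (1-t)x + ty$ and $\gamma_t^2 = (1-t)\alpha'^2 + t\beta'^2$. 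In $\Phi$-coordinates, $\Phi(z_t,\gamma_t) = (1-t)\Phi(x,\alpha') + t\Phi(y,\beta')$, so $\Phi(K_\sigma)$ is closed under convex combinations, hence convex, hence contractible; and therefore so is $K_\sigma$.

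The main obstacle, as I see it, is confirming that the convex-combination construction of Lemma~\ref{lem:monotonicity} truly applies to any pair of points of $K_\sigma$, not merely to two distinguished endpoint times, and that the two inequalities it relies on---the convex upper bound on the squared power distance $\pdist_{i,\gamma_t}(z_t)^2 \le \gamma_t^2$ and the Voronoi halfspace inequality $2 z_t^\top(p_j - p_i) \le w_j(\gamma_t)^2 - w_i(\gamma_t)^2$---continue to hold in this full generality. Both are pointwise in $x$ and $y$, so once Lemma~\ref{lem:monotonicity} is in place, this is essentially a bookkeeping step, and the reparameterization $\Phi$ then repackages that work into the geometric statement of a good filtered cover.
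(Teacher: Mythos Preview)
Your covering and filtration-monotonicity arguments are fine, and your route to contractibility---reparameterize by $\Phi(x,\gamma)=(x,\gamma^2)$ and argue that $\Phi(K_\sigma)$ is convex---is genuinely different from the paper's. The paper does not claim convexity at all: it uses only that each time-slice of $K_\sigma$ is convex, retracts each slice to a canonical point (the orthocenter), and then invokes the \emph{statement} of Lemma~\ref{lem:monotonicity} to conclude that the set of nonempty slices is an interval, so the orthocenters form a connected arc and $K_\sigma$ is contractible.

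Your approach, by contrast, leans on the \emph{proof} of Lemma~\ref{lem:monotonicity}, and this is where there is a real gap. The convex-combination step in that proof needs the identity $w_j(\gamma_t)^2=(1-t)w_j(\alpha')^2+tw_j(\beta')^2$ for \emph{every} $p_j\in P$, not just for $p_j\in\sigma$; otherwise the Voronoi halfspace inequality $2z_t^\top(p_j-p_i)\le w_j(\gamma_t)^2-w_i(\gamma_t)^2$ does not follow from its analogues at $\alpha'$ and $\beta'$. That identity holds only when no freezing time $\lambda_j$ lies strictly between $\alpha'$ and $\beta'$. The paper's monotonicity proof secures this by first reducing to $\alpha,\beta\in[\lambda_\sigma,(1+\e)\lambda_\sigma]$, a single wave interval containing no intermediate $\lambda_j$ (since all freezing times are powers of $1+\e$). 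Your convexity claim, however, must hold for \emph{arbitrary} $(x,\alpha'),(y,\beta')\in K_\sigma$, and a vertex cell $V_{+i}^\alpha$ can span many waves. A two-point example already breaks it: take $p_0=0$, $p_1=r$ with $r$ small, $\lambda_0$ large, $\lambda_1=1$; the right boundary of $\Phi(V_{+0}^\alpha)$ in the $(x,s)$-plane is constant at $r/2$ for $s\le 1$ and then increases linearly with slope $1/(2r)$, a convex (not concave) kink, so $\Phi(V_{+0}^\alpha)$ is not convex. This is not bookkeeping; it is exactly the place where the wave structure and the preliminary reduction in Lemma~\ref{lem:monotonicity} are doing real work, and your argument bypasses them.
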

\begin{proof}
  Fix any $\alpha\ge 0$.
  Let $\sigma \in \nerve(V_+^\alpha)$ be any simplex.
  We will show that the intersection of the lifted, clipped Voronoi cells $\{V_i^\alpha \mid p_i\in \sigma\}$ is contractible.

  In each $d$-dimensional slice, the clipped Voronoi cells $V_i^\alpha$ are convex, so their intersection is convex.
  We can deformation retract the nonempty clipped Voronoi cells in each slice to the orthocenter, i.e., the point in the cell that minimizes the distance to the points of $\sigma$.
  The cells change continuously in time and so does the orthocenter of the simplex.
  The retractions in each slice will be continuous as a retraction in $\R^d+1$.
  By Lemma~\ref{lem:monotonicity} the collection of orthocenters in the slices will form a connected path and thus are contractible.
\end{proof}

\begin{theorem}\label{thm:correctness}
  Let $\e>0$ be any constant.
  Let $(D_+^\alpha)$ be the $\e$-Sparse Delaunay Filtration for $P\subset \R^d$.
  Then, $\dgm(D_+^\alpha)$ is a $(1+\e)$-approximation to $\dgm(P^\alpha)$.
\end{theorem}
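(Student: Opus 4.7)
The plan is to execute the three-step chain flagged at the start of Section~\ref{sec:correctness}:
\[
\dgm(P^\alpha) \;\approx_{1+\e}\; \dgm(\hat P^\alpha) \;=\; \dgm\bigl(\textstyle\bigcup V_+^\alpha\bigr) \;=\; \dgm(D_+^\alpha),
\]
with each step discharged by one of the ingredients already in place.

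First, I would invoke the Weighted Offset Interleaving (Lemma~\ref{lem:weighted_offset_interleaving}), which gives $\hat{P}^\alpha \subseteq P^\alpha \subseteq \hat{P}^{(1+\e)\alpha}$ for every $\alpha \ge 0$. The Sparse Delaunay Filtration fixes the freezing times so that $\lambda_i \ge \frac{1+\e}{\e}r_i$, hence the hypothesis is satisfied. These inclusions are exactly a multiplicative $(1+\e)$-interleaving between the filtrations $(P^\alpha)$ and $(\hat P^\alpha)$. The multiplicative (log-scale) version of the persistence stability theorem then yields that $\dgm(\hat P^\alpha)$ is a $(1+\e)$-approximation of $\dgm(P^\alpha)$.

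Second, I would identify the persistence diagrams of $(\hat P^\alpha)$ and $\bigl(\bigcup V_+^\alpha\bigr)$. Lemma~\ref{lem:clipped_cells_cover_offsets} already supplies the slicewise equality $\hat P^\alpha = \bigcup_i V_i^\alpha$. The lifted union sits inside $\R^d \times [0,\alpha]$ with fibers over the base that are line segments (by the very definition of $V_{+i}^\alpha$ and the monotonicity of $\alpha \mapsto V_i^\alpha$ guaranteed by Lemma~\ref{lem:monotonicity}), so projection onto the $\R^d$ factor is a filtration-respecting homotopy equivalence. This forces the two persistence diagrams to coincide.

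Third, Lemma~\ref{lem:good_cover} establishes that $(V_+^\alpha)$ is a good filtered cover of $\bigl(\bigcup V_+^\alpha\bigr)$, so the Persistent Nerve Lemma applies and gives $\dgm(D_+^\alpha) = \dgm\bigl(\bigcup V_+^\alpha\bigr)$, since by construction $D_+^\alpha = \nerve(V_+^\alpha)$. Composing the three relations produces the claimed $(1+\e)$-approximation.

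The step I expect to be the main obstacle is the compatibility of step two with the filtration structure. Although the slicewise equality and the fiberwise deformation retract onto the base are each obvious in isolation, one has to verify that the projections assemble into a commuting ladder with the inclusions of the $\alpha$-filtration, so that the induced map on persistence modules is an isomorphism rather than merely a slicewise one. The cylindrical form of $V_{+i}^\alpha$ and the monotonicity granted by Lemma~\ref{lem:monotonicity} are precisely what is needed to make the retraction natural in $\alpha$; once that naturality is recorded, the rest of the argument is a straightforward chain of known facts.
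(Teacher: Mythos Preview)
Your proposal is correct and follows essentially the same route as the paper: the interleaving from Lemma~\ref{lem:weighted_offset_interleaving} supplies the $(1+\e)$-approximation, and Lemma~\ref{lem:good_cover} together with the Persistent Nerve Lemma identifies $\dgm(D_+^\alpha)$ with $\dgm(\hat P^\alpha)$. You are in fact more explicit than the paper about the passage through $\bigcup V_+^\alpha = \hat P_+^\alpha$ and the projection back to $\hat P^\alpha$; the paper folds that step into its statement of Lemma~\ref{lem:good_cover}.

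One small correction in your step two: the monotonicity you need is that of $\alpha\mapsto \hat P^\alpha$ (the \emph{union}), which is immediate from $b_i^\alpha=\ball(p_i,\min\{\alpha,\lambda_i\})$ and is exactly what the paper invokes when introducing $\hat P_+^\alpha$. The individual cells $\alpha\mapsto V_i^\alpha$ are \emph{not} monotone---they shrink and eventually vanish (Lemma~\ref{lem:empty_clipped_cells})---and Lemma~\ref{lem:monotonicity} is about Delaunay simplices, not about the cells themselves. This does not affect your argument, only its attribution.
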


\begin{proof}
  By Lemma~\ref{lem:good_cover}, the lifted clipped Voronoi cells $\{V_i^\alpha\}$ form a good cover of $\hat{P}^\alpha$ for all $\alpha$ and therefore, by the Persistent Nerve Lemma,
  \[
    \dgm(D^\alpha) = \dgm(\hat{P}^\alpha).
  \]
  Lemma~\ref{lem:weighted_offset_interleaving} gives an interleaving of persistence modules (see~\cite{chazal16structure}) so that $\dgm(\hat{P}^\alpha)$ is a $(1+\e)$-approximation to $\dgm(P^\alpha)$.
  Combining these facts, we get that $\dgm(D^\alpha)$ is a $(1+\e)$-approximation to $\dgm(P^\alpha)$.
\end{proof}

  \subsection{Size Analysis}

In general, the Delaunay complex on $n$ points (in general position) may have $O(n^{\lceil d/2 \rceil})$ simplices~\cite{seidel87number}.
There are several special cases where it is known that the Delaunay complex has size $O(n)$.
Most such cases are based on input models that guarantee the points are spaced according to some Poisson process.
The analysis invariably depends on showing that the complex is everywhere locally sparse in the sense that every vertex participates in at most a constant number of simplices.
It will not be hard to show that a similar bound holds for every slice and, in particular, every slice has linear size (Lemma~\ref{lem:slices_are_sparse}).
Then, we will show that the total number of simplices (the union of all slices) also has linear size (Theorem~\ref{thm:linear_size}).

\begin{lemma}\label{lem:slices_are_sparse}
  For all $\alpha\ge 0$, every vertex in the weighted Delaunay complex $D^\alpha$ has at most $O\left(\left(\frac{1+\e}{\e}\right)^d\right)$ neighbors.
\end{lemma}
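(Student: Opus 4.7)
The plan is a standard packing argument: every neighbor of $p_i$ in $D^\alpha$ lies inside a ball around $p_i$ of radius controlled by $\alpha$, while any two such neighbors are pairwise separated by a distance proportional to $\frac{\e}{1+\e}\cdot\alpha$. A volume count in $\R^d$ then gives the claimed bound.

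First, I would establish the containment radius. If $p_k$ is a neighbor of $p_i$, then $V_i^\alpha \cap V_k^\alpha \neq \emptyset$, so in particular both clipped cells are nonempty. By Lemma~\ref{lem:empty_clipped_cells} this forces $\alpha \le (1+\e)\lambda_k$ (and similarly for $i$). Since each $V_j^\alpha$ is contained in the ball $b_j(\alpha)$ of Euclidean radius $\min(\alpha,\lambda_j)\le\alpha$, any witness point of the intersection shows that $\|p_i-p_k\|\le 2\alpha$. Thus every neighbor of $p_i$ lies in $\ball(p_i,2\alpha)$.

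Next comes the packing lower bound, which is the step that actually uses the specific rounding rule for $\lambda_k$. From the definition
\[
  \lambda_k = (1+\e)^{\left\lceil \log_{1+\e}\left(\frac{1+\e}{\e}r_k\right)\right\rceil}
\]
we get $\lambda_k \le \frac{(1+\e)^2}{\e}r_k$. Combining with the liveness bound $\alpha\le(1+\e)\lambda_k$ yields
\[
  r_k \ge \frac{\e}{(1+\e)^3}\alpha.
\]
Now take two distinct neighbors $p_k, p_{k'}$ of $p_i$ with $k<k'$. The defining greedy property gives $\|p_k-p_{k'}\|\ge r_{k'}$, so every pair of neighbors is separated by at least $\frac{\e}{(1+\e)^3}\alpha$.

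To finish, the standard volume/packing inequality bounds the number of points lying in $\ball(p_i,2\alpha)$ with pairwise distance at least $\frac{\e}{(1+\e)^3}\alpha$ by $O\!\left(\left(\frac{1+\e}{\e}\right)^d\right)$, which is exactly the claim. The only real obstacle is tracking the rounding chain from $\alpha$ through $\lambda_k$ down to $r_k$ correctly; once that constant is in hand, both the containment and packing estimates reduce to elementary geometry, and the conclusion follows from a textbook Euclidean packing count.
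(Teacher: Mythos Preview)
Your proposal is correct and follows essentially the same argument as the paper: bound edge lengths by $2\alpha$, use Lemma~\ref{lem:empty_clipped_cells} plus the rounding rule for $\lambda_k$ to get $r_k\ge \frac{\e}{(1+\e)^3}\alpha$, and finish with the standard Euclidean packing count. The paper's proof is terser but matches yours step for step, including the constant $\frac{\e}{(1+\e)^3}$.
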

\begin{proof}
  Let $p$ be any point in $P$.
  Let $Q = \{q_0,\ldots, q_{k-1}\}$ be the neighbors of $p$ in $D^\alpha$.
  Every edge of $D^\alpha$ has length at most $2\alpha$.
  By Lemma~\ref{lem:empty_clipped_cells}, every $q_i$ must have a freezing time at least $\frac{\alpha}{1 + \e}$ and thus, an insertion radius of at least $\frac{\e\alpha}{(1+\e)^3}$.
  So, the points of $Q$ are contained in $\ball(p,2\alpha)$ and are all pairwise $\frac{\e\alpha}{(1+\e)^3}$-separated.
  By comparing the volumes of the $k$ disjoint empty balls of radius $\frac{\e\alpha}{2(1+\e)^3}$ around the points of $Q$ to the volume of the ball of radius that $\left(2+ \frac{\e}{(1+\e)^3}\right)\alpha$ that contains them, we get that $k = O\left(\left(\frac{1+\e}{\e}\right)^d\right)$.
\end{proof}

\begin{theorem}\label{thm:linear_size}
  Let $P$ be a set of $n$ points in $\R^d$ with greedy weights.
  Let $\e\ge 0$ be a constant.
  The total size of $(D^\alpha)$ is $O(n)$.
\end{theorem}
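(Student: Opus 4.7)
The plan is to count the distinct simplices ever appearing in the filtration $(D^\alpha)$ via a charging argument. By Lemma~\ref{lem:monotonicity}, every simplex has a single contiguous interval of existence, so this count equals the total size of the filtration. I would charge each such simplex $\sigma$ to its vertex of largest index in the greedy permutation, denoted $p_{k(\sigma)}$. It then suffices to show that the number of simplices charged to each $p_k$ is bounded by a constant depending only on $\e$ and $d$, since summing over the $n$ vertices would then give the claimed $O(n)$ bound.

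For a fixed $k$, every simplex $\sigma$ with $k(\sigma) = k$ contains $p_k$ and has all other vertices with index $< k$. Since $\sigma$ must appear in $D^\alpha$ for some $\alpha$, and $p_k \in \sigma$ forces $\alpha \le (1+\e)\lambda_k$ by Lemma~\ref{lem:empty_clipped_cells}, every vertex of $\sigma$ lies at distance at most $2\alpha \le 2(1+\e)\lambda_k$ from $p_k$ (any Delaunay edge in $D^\alpha$ has length at most $2\alpha$, since its endpoints share a point in two clipped balls each of radius at most $\alpha$). Moreover, any two distinct points $p_i, p_j$ of the greedy permutation with $\max(i,j) \le k$ satisfy $\|p_i - p_j\| \ge r_{\max(i,j)} \ge r_k$, and the definition of the freezing times gives $r_k \ge \e\lambda_k/(1+\e)^2$. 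Hence the candidate vertex set forms a pairwise $(\e\lambda_k/(1+\e)^2)$-separated packing inside $\ball(p_k, 2(1+\e)\lambda_k)$, and a volume comparison bounds its size by $C = O\bigl(((1+\e)^3/\e)^d\bigr)$.

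The number of simplices containing $p_k$ whose other vertices all lie among these $C$ candidates is at most $2^C$, a constant depending only on $\e$ and $d$, so the total count is $O(n)$. The main obstacle is the packing step: one must observe that the charging rule \emph{largest index} is precisely what makes $p_k$'s own insertion radius $r_k$ (rather than some smaller $r_i$ coming from a later-inserted neighbor) the correct lower bound on pairwise distances, and that the monotonicity of greedy insertion radii then transfers this separation to every pair of candidate vertices. Once this is set up carefully, the volume packing and the charging collapse to the desired linear bound without any further sophistication.
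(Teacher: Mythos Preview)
Your argument is correct and is essentially the same as the paper's: both charge each simplex to its vertex of smallest insertion radius (equivalently, largest greedy index) and then bound the number of simplices charged to a fixed vertex by a packing argument identical in spirit to Lemma~\ref{lem:slices_are_sparse}. Your write-up simply spells out the constants and the $2^C$ subset count more explicitly than the paper, which defers the details to prior sparse-filtration papers.
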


\begin{proof}
  The proof follows the exact pattern of previous work on sparse filtrations (see \cite{sheehy13linear,cavanna15geometric} for more a more detailed analysis).
  Each simplex of the filtration is charged to the vertex with the smallest insertion radius.
  By a packing volume argument analogous to that in Lemma~\ref{lem:slices_are_sparse}, we see that no vertex is charged for more than a constant number of simplices in the final filtration.
  Thus the total size is $O(n)$ as desired.
\end{proof}

  \section{Efficient Construction}
\label{sec:algorithm}

The main challenge to efficiently constructing the Sparse Delaunay Filtration is to avoid constructing the entire Delaunay complex.
Doing so could easily negate any efficiency gains from sparsification.
In this section, we will describe an approach based on Voronoi refinement that uses extra points called Steiner points to keep the complexity of the Delaunay complex linear in the number of points.
None of the Steiner points will appear in the output filtration.
They serve only to fill in large gaps that could potentially create a superlinear number of Delaunay simplices.

Earlier work in approximating the persistence diagram of the distance to Euclidean points also used Steiner points~\cite{hudson10topological,sheehy11thesis}, but in that case, the Steiner points were an essential part of the filtration.
Steiner points and Voronoi refinement have also been used in output-sensitive algorithms to construct $D^\alpha$~\cite{sheehy15output}.
In that case, the output could still be superlinear in the input size depending on the arrangement of the points and the choice of $\alpha$.

Flip-based algorithms for computing the Delaunay triangulation start the insertion of a new point by flipping a new vertex in with a $(1, d+1)$-flip.
This requires that the new point is contained in one of the simplices of the current triangulation.
If we ignore or discard simplices in the Delaunay triangulation that do not appear in the subcomplex $D^\alpha$, then we cannot necessarily flip in new points, because we could have discarded a simplex containing the new point.
If we maintain the full Delaunay triangulation at all times, we might store too many simplices.
To balance between the two, we use Steiner point to give a sparse representation of the regions far from the input points at a given scale.
In Section~\ref{sec:voronoi_refinement}, we explain how these Steiner points are chosen.
Then, in Section~\ref{sec:no_steiners}, we show why these Steiner points do not affect the output.
The full algorithm is then presented in Section~\ref{sec:full_algorithm} and analyzed in Section~\ref{sec:analysis}

\subsection{Voronoi Refinement}
\label{sec:voronoi_refinement}

Voronoi cells are polyhedra whose vertices we will call \emph{corners} to distinguish them from the input vertices.
We will assume that the affine closure of the points is $d$-dimensional, so every Voronoi cell has at least one corner.
Let $P\subset\R^d$ and let $y\in P$ be any point.
Let $z$ be the nearest neighbor of $y$ in $P$, and let $x$ be the corner of $\vor_P(y)$ farthest from $y$.
The \emph{aspect} of $\vor_P(y)$ is
\[
  \aspect(y) := \frac{\|y-x\|}{\|y-z\|}.
\]
The point set $P$ is \emph{$\tau$-well-spaced} if $\aspect(y) <\tau$ for all $y\in P$.
There are many advantages to well-spaced points when constructing Delaunay triangulations.
A major advantage is that the number of simplices incident to any vertex will be at most a constant.
So, the total complexity of the Delaunay triangulation of $n$ well-spaced points is at most $O(n)$.

Voronoi refinement is a variant of Delaunay refinement~\cite{chew93guaranteed,ruppert95delaunay} and is commonly used in mesh generation (see the books by Edelsbrunner~\cite{edelsbrunner01geometry} and Cheng et al.~\cite{cheng12delaunay} for more details).
The corners of a Voronoi cell are the circumcenters of their dual Delaunay simplices.
The basic Voronoi refinement algorithm is to add the farthest corner of any cell $\vor_P(y)$ for which $\aspect(y)>\tau$.
Repeating this process eventually produces a $\tau$-well-spaced set of points.
Moreover, the total number of points in the output is asymptotically optimal~\cite{ruppert95delaunay,sheehy12new}, i.e., the size is within a constant factor of any $\tau$-well-spaced superset of $P$.

\subsection{Why the Steiner points don't appear in the filtration}
\label{sec:no_steiners}

Every edge in $D^\alpha$ is induced by the intersection of two clipped Voronoi cells, so the length of every edge is at most $2\alpha$.
The following lemma is the key to guarantee that the Sparse Delaunay Filtration we construct contains no Steiner points as vertices.
It shows that the Steiner points are always more than $2\alpha$ away from any other points, and therefore, there can be no edges incident to a Steiner point in $D^\alpha$.

\begin{lemma}\label{lem:steiner_points_stay_far_from_P}
  Let $\e$ and $\alpha$ be nonnegative real numbers.
  Let $P$ and $S$ be subsets of $\R^d$ such that no point of $S$ is within $2\alpha$ and no point of $P$ is within $\e\alpha$ of any other point of $P\cup S$.
  If $S'$ is formed by adding Steiner points to $S$ at the far corners of Voronoi cells with aspect greater than $\frac{2}{\e}$, then no point of $S'$ will be within $2\alpha$ of any other point of $P\cup S'$.
\end{lemma}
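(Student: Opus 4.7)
The plan is to proceed by induction on the number of Steiner points inserted, treating Voronoi refinement as inserting corner-points one at a time and updating the diagram after each insertion; the base case $S' = S$ is just the hypothesis. For the inductive step I would show that a single newly-inserted Steiner point $x$ ends up at distance greater than $2\alpha$ from every point of the current set $P \cup S'$; since $P$ is never modified, its own $\e\alpha$-separation is preserved automatically, and the $2\alpha$-separation of old points from one another is preserved because we only add, never remove.

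The key geometric observation is that, by construction, $x$ is the farthest corner of the Voronoi cell $\vor(y)$ for some $y$ in the current point set with $\aspect(y) > 2/\e$. Because $x$ lies in $\vor(y)$, we have $\|x - p\| \geq \|x - y\|$ for every $p$ in $P \cup S'$, so it suffices to prove the single inequality $\|x - y\| > 2\alpha$. The aspect condition itself already reads $\|x - y\| > (2/\e)\,\|y - z\|$, where $z$ is the nearest neighbor of $y$ in $P \cup S'$, so the entire argument reduces to lower-bounding $\|y - z\|$.

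For that I would split on the type of $y$. If $y \in P$, the invariant on $P$ gives $\|y - z\| \geq \e\alpha$, and therefore $\|x - y\| > (2/\e)\,\e\alpha = 2\alpha$. If $y \in S'$, the inductive invariant on $S'$ gives $\|y - z\| \geq 2\alpha$, and therefore $\|x - y\| > (2/\e)\cdot 2\alpha = 4\alpha/\e \geq 2\alpha$, using the mild parameter assumption $\e \leq 2$ that holds in the regime of interest (and outside of which the lemma is not meaningful, since the aspect threshold $2/\e < 1$ would never trigger a useful refinement). In both cases the newly inserted $x$ is strictly more than $2\alpha$ from every point of $P \cup S'$, so appending $x$ to $S'$ preserves the invariant and the induction closes.

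The only step that I expect to require care is the bookkeeping for the one-at-a-time interpretation of ``adding Steiner points.'' Voronoi refinement is inherently iterative: inserting one corner can change the aspect of other cells and spawn further insertions, so the induction hypothesis must be framed as the full invariant ``every point of the current $S'$ is at distance at least $2\alpha$ from every other point of $P \cup S'$,'' rather than as a one-shot statement about $S$. Once this invariant is in place, the nearest-neighbor bound for each newly added corner follows directly from the aspect threshold and the inductive hypothesis, and the lemma is just the cumulative statement after all refinement steps terminate.
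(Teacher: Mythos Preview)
Your proposal is correct and essentially identical to the paper's proof: both argue by induction over one-at-a-time Steiner insertions, both reduce to $\|x-y\|>2\alpha$ via the Voronoi membership of $x$, and both extract this from the aspect bound combined with a lower bound on $\|y-z\|$. The only cosmetic difference is that the paper skips your case split on $y\in P$ versus $y\in S'$ and simply asserts $\|y-z\|\ge \e\alpha$, which is the weaker of the two bounds under the same implicit assumption $\e\le 2$ that you make explicit.
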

\begin{proof}
  It suffices to show that the spacing condition holds after the insertion of each Steiner point $x$.
  Let $y$ be the point whose Voronoi cell was refined by the addition of $x$ and let $z$ be the nearest neighbor of $y$.
  Then, $\|y-z\|\ge \e\alpha$ and the aspect of the cell is bounded as
  \[
    \frac{2}{\e} < \frac{\|y-x\|}{\|y-z\|} \le \frac{\|y-x\|}{\e\alpha}.
  \]
  It follows that $\|y-x\| > 2\alpha$.  Because $x$ was in the Voronoi cell of $y$, it follows that the distance to any other point is also greater than $2\alpha$.
\end{proof}

\subsection{The Full Algorithm}
\label{sec:full_algorithm}

In the preprocessing phase of the algorithm, we compute a greedy permutation of $P$.
During this computation, we also compute for each $p_i$, the nearest predecessor in the ordering as well as its distance, the insertion radius $r_i$.
The radius will be used to establish the weights and define the waves.
The nearest predecessor will be used for point location when inserting new points.

The algorithm then proceeds by constructing the filtration one wave at a time in order of the greedy permutation.
That is, for wave $w$, the filtration is constructed for the interval $[(1+\e)^w,(1+\e)^{w+1}]$.
Moreover, as the construction increases the density of points as it goes, it also decreases the radius, so the simplices of the filtration are discovered in reverse order.

At the start of each wave, some set $U$ of points have already been inserted, and some set $F$ of points will be inserted into to the Delaunay triangulation.
The points $U$ are unfrozen throughout the wave so their weights will always be zero.
The points $F$ are frozen at the start of the wave interval, so their weights will vary equally in time.
We start the wave by locating the Delaunay simplices containing each of the points of $F$.
Each pair of a point and the $d$-simplex that contains it forms a $(d+1)$-simplex.
We compute the flip time for each of these simplices and store the flip in the event queue.

Processing the flips only requires that we remove the next flip from the event queue (i.e., the maximum flip time).
We check that the simplices are still present in the complex, i.e., that no other flip removed some of its subsimplices.
Then, we execute the flip, updating the Delaunay triangulation, and relocating uninserted points of $F$ that were in the removed simplices.

For each flip, we update the birth times of all simplices that may have been affected.
That is, if $S$ is the set of $d+2$ points involved in the flip at scale $\alpha$, we will process each simplex $\sigma\subseteq S$ starting with the highest dimensions.
The tentative birth time of every simplex is computed assuming that the structure of the triangulation will not change.
A simplex is \emph{finalized} when we add it to the filtration.
A simplex is \emph{discarded} if we have established that it will never appear in the filtration.
Discarded simplices do not need to be updated in this step.
For $\sigma = S$, set $\birth(S) = \alpha$ and either finalize it if $\radius(S)\le \alpha$, or discard it otherwise.
For simplices $\sigma$ removed by the flip, either finalize it if $\birth(\sigma) \ge \alpha$ or discard it otherwise.
For all other simplices, if the newly computed birth time is at most $\alpha$, then update $\birth(\sigma)$ and finalize $\sigma$ otherwise.


At the end of each wave, we perform a Voronoi refinement step, adding Steiner points until the points are $\frac{2}{\e}$-well-spaced.
That is, while any Voronoi cell has aspect greater than $\frac{2}{\e}$, we add it farthest corner.
As a consequence of Lemma~\ref{lem:steiner_points_stay_far_from_P}, no changes to the filtration are made at this time.

The overall running time is just the cost of computing a greedy permutation, doing an incremental Delaunay triangulation with sparse refinement, and performing a constant amount of extra work per flip.
These are relatively standard analyses, so in the interest of space, they have been relegated to Appendix~\ref{sec:analysis}.

  \section{Conclusion}
\label{sec:conclusion}

We have presented a linear size Delaunay filtration for $n$ points in $\R^d$ as well an efficient algorithm to compute it.

It is also relevant to note that this algorithm also can be used to compute a well-spaced set of points.
That is, if one keeps the final Delaunay triangulation of the input plus the Steiner points, the result will be well-spaced.
Performing a more aggressive Voronoi refinement to achieve a better spacing constant then resembles a standard Voronoi/Delaunay refinement starting from a well-spaced point set.
This is substantially simpler than previous algorithms to do Sparse Voronoi Refinement~\cite{hudson06sparse,miller11beating} because it obviates any need to ``snap'' Steiner points to nearby input points.
It is not obvious  whether the tradeoff between a size increase from the difference in the spacing constant offsets the improvements from simplified point location.

  \bibliographystyle{abbrv}
  \bibliography{../../bib/bibliography}

\begin{thebibliography}{10}

\bibitem{acar07svr}
U.~A. Acar, B.~Hudson, G.~L. Miller, and T.~Phillips.
\newblock {SVR}: Practical engineering of a fast {3D} meshing algorithm.
\newblock In {\em Proc. 16th International Meshing Roundtable}, pages 45--62,
  2007.

\bibitem{aurenhammer17voronoi}
F.~Aurenhammer, B.~J{\"u}ttler, and G.~Paulini.
\newblock Voronoi diagrams for parallel halflines and line segments in space.
\newblock In Y.~Okamoto and T.~Tokuyama, editors, {\em 28th International
  Symposium on Algorithms and Computation (ISAAC 2017)}, volume~92 of {\em
  Leibniz International Proceedings in Informatics (LIPIcs)}, pages 7:1--7:10.
  Schloss Dagstuhl--Leibniz-Zentrum fuer Informatik, 2017.

\bibitem{botnan14approximating}
M.~B. Botnan and G.~Spreemann.
\newblock Approximating persistent homology in {E}uclidean space through
  collapses.
\newblock {\em Applicable Algebra in Engineering, Communication and Computing},
  26(1):73--101, 2015.
\newblock arXiv:1403.0533.

\bibitem{bowyer81computing}
A.~Bowyer.
\newblock Computing dirichlet tessellations.
\newblock {\em The Computer Journal}, 2(24):162--166, 1981.

\bibitem{brown79voronoi}
K.~Q. Brown.
\newblock Voronoi diagrams from convex hulls.
\newblock {\em Information Processing Letters}, 9(5):223--228, 1979.

\bibitem{buchet15efficient}
M.~Buchet, F.~Chazal, S.~Y. Oudot, and D.~R. Sheehy.
\newblock Efficient and robust persistent homology for measures.
\newblock In {\em ACM-SIAM Symposium on Discrete Algorithms}, pages 168--180,
  2015.

\bibitem{cavanna15geometric}
N.~J. Cavanna, M.~Jahanseir, and D.~R. Sheehy.
\newblock A geometric perspective on sparse filtrations.
\newblock In {\em Proceedings of the Canadian Conference on Computational
  Geometry}, 2015.

\bibitem{chazal16structure}
F.~Chazal, V.~de~Silva, M.~Glisse, and S.~Oudot.
\newblock {\em The Structure and Stability of Persistence Modules}.
\newblock SpringerBriefs in Mathematics. Springer International Publishing,
  2016.

\bibitem{chazal08towards}
F.~Chazal and S.~Y. Oudot.
\newblock Towards persistence-based reconstruction in {E}uclidean spaces.
\newblock In {\em Proceedings of the 24th {ACM} Symposium on Computational
  Geometry}, pages 232--241, 2008.

\bibitem{cheng12delaunay}
S.-W. Cheng, T.~K. Dey, and J.~R. Shewchuk.
\newblock {\em Delaunay Mesh Generation}.
\newblock CRC Press, 2012.

\bibitem{chew93guaranteed}
L.~P. Chew.
\newblock {G}uaranteed-{Q}uality {M}esh {G}eneration for {C}urved {S}urfaces.
\newblock In {\em Proceedings of the Ninth Annual Symposium on Computational
  Geometry}, pages 274--280, 1993.

\bibitem{clarkson03nearest}
K.~L. Clarkson.
\newblock Nearest neighbor searching in metric spaces: Experimental results for
  `sb(s)`.
\newblock Preliminary version presented at ALENEX99, 2003.

\bibitem{devillers98improved}
O.~Devillers.
\newblock Improved incremental randomized delaunay triangulation.
\newblock In {\em Proceedings of the fourteenth annual symposium on
  Computational geometry}, pages 106--115, 1998.

\bibitem{dey16simba}
T.~K. Dey, D.~Shi, and Y.~Wang.
\newblock Simba: An efficient tool for approximating rips-filtration
  persistence via simplicial batch-collapse.
\newblock In {\em 24th Annual European Symposium on Algorithms}, pages
  206:1--206:16, 2016.

\bibitem{dyer85simple}
M.~Dyer and A.~Frieze.
\newblock A simple heuristic for the p-centre problem.
\newblock {\em Operations Research Letters}, 3(6):285--288, 1985.

\bibitem{edelsbrunner95union}
H.~Edelsbrunner.
\newblock The union of balls and its dual shape.
\newblock {\em Discrete \& Computational Geometry}, 13:415--440, 1995.

\bibitem{edelsbrunner01geometry}
H.~Edelsbrunner.
\newblock {\em Geometry and Topology for Mesh Generation}.
\newblock Cambridge University Press, 2001.

\bibitem{edelsbrunner83shape}
H.~Edelsbrunner, D.~G. Kirkpatrick, and R.~Seidel.
\newblock On the shape of a set of points in the plane.
\newblock {\em IEEE Transactions on Information Theory}, 29(4):551--559, 1983.

\bibitem{edelsbrunner02topological}
H.~Edelsbrunner, D.~Letscher, and A.~Zomorodian.
\newblock Topological persistence and simplification.
\newblock {\em Discrete \& Computational Geometry}, 4(28):511--533, 2002.

\bibitem{edelsbrunner86voronoi}
H.~Edelsbrunner and R.~Seidel.
\newblock Voronoi diagrams and arrangements.
\newblock {\em Discrete \& Computational Geometry}, 1(1):25--44, 1986.

\bibitem{edelsbrunner96topological}
H.~Edelsbrunner and N.~R. Shah.
\newblock Incremental topological flipping works for regular triangulations.
\newblock {\em Algorithmica}, 15, 1996.

\bibitem{gonzalez85clustering}
T.~F. Gonzalez.
\newblock Clustering to minimize the maximum intercluster distance.
\newblock {\em Theor. Comput. Sci.}, 38:293--306, 1985.

\bibitem{guibas98kinetic}
L.~J. Guibas.
\newblock Kinetic data structures---a state of the art report.
\newblock In {\em Proc. Workshop Algorithmic Found. Robot}, pages 191--209,
  1998.

\bibitem{guibas92randomized}
L.~J. Guibas, D.~E. Knuth, and M.~Sharir.
\newblock Randomized incremental construction of delaunay and voronoi diagrams.
\newblock {\em Algorithmica}, 7:381--413, 1992.

\bibitem{har-peled06fast}
S.~Har-Peled and M.~Mendel.
\newblock Fast construction of nets in low dimensional metrics, and their
  applications.
\newblock {\em SIAM Journal on Computing}, 35(5):1148--1184, 2006.

\bibitem{hudson06sparse}
B.~Hudson, G.~Miller, and T.~Phillips.
\newblock {S}parse {V}oronoi {R}efinement.
\newblock In {\em Proceedings of the 15th International Meshing Roundtable},
  pages 339--356, Birmingham, Alabama, 2006.
\newblock Long version available as Carnegie Mellon University Technical Report
  CMU-CS-06-132.

\bibitem{hudson10topological}
B.~Hudson, G.~L. Miller, S.~Y. Oudot, and D.~R. Sheehy.
\newblock Topological inference via meshing.
\newblock In {\em Proceedings of the 26th {ACM} Symposium on Computational
  Geometry}, pages 277--286, 2010.

\bibitem{hudson09size}
B.~Hudson, G.~L. Miller, T.~Phillips, and D.~R. Sheehy.
\newblock Size complexity of volume meshes vs. surface meshes.
\newblock In {\em ACM-SIAM Symposium on Discrete Algorithms}, 2009.

\bibitem{joe89three}
B.~Joe.
\newblock Three-dimensional triangulations from local transformations.
\newblock {\em SIAM J. Sci. Stat. Comput.}, 10:718--741, 1989.

\bibitem{lawson72transforming}
C.~L. Lawson.
\newblock Transforming triangulations.
\newblock {\em Discrete Mathematics}, 3:365--372, 1972.

\bibitem{lawson77software}
C.~L. Lawson.
\newblock Software for {C1} surface interpolation.
\newblock In J.~R. Rice, editor, {\em Mathematical Software}, volume III, pages
  161--194. Academic, New York, 1977.

\bibitem{miller08linear}
G.~L. Miller, T.~Phillips, and D.~R. Sheehy.
\newblock Linear-size meshes.
\newblock In {\em Canadian Conference in Computational Geometry}, pages
  175--178, 2008.

\bibitem{miller11beating}
G.~L. Miller, T.~Phillips, and D.~R. Sheehy.
\newblock Beating the spread: Time-optimal point meshing.
\newblock In {\em Proceedings of the 26th ACM Symposium on Computational
  Geometry}, pages 321--330, 2011.

\bibitem{miller14new}
G.~L. Miller and D.~R. Sheehy.
\newblock A new approach to output-sensitive construction of voronoi diagrams
  and delaunay triangulations.
\newblock {\em Discrete \& Computational Geometry}, 52(3):476--491, 2014.

\bibitem{miller13fast}
G.~L. Miller, D.~R. Sheehy, and A.~Velingker.
\newblock A fast algorithm for well-spaced points and approximate delaunay
  graphs.
\newblock In {\em Proceedings of the 29th annual Symposium on Computational
  Geometry}, pages 289--298, 2013.

\bibitem{ruppert95delaunay}
J.~Ruppert.
\newblock A {D}elaunay refinement algorithm for quality $2$-dimensional mesh
  generation.
\newblock {\em J. Algorithms}, 18(3):548--585, 1995.

\bibitem{seidel87number}
R.~Seidel.
\newblock On the number of faces in higher-dimensional {V}oronoi diagrams.
\newblock In {\em Proceedings of the 3rd Annual Symposium on Computational
  Geometry}, pages 181--185, 1987.

\bibitem{sheehy11thesis}
D.~R. Sheehy.
\newblock {\em Mesh Generation and Geometric Persistent Homology}.
\newblock PhD thesis, Carnegie Mellon University, 2011.
\newblock CMU CS Tech Report CMU-CS-11-121.

\bibitem{sheehy12new}
D.~R. Sheehy.
\newblock {New Bounds on the Size of Optimal Meshes}.
\newblock {\em Computer Graphics Forum}, 31(5):1627--1635, 2012.

\bibitem{sheehy13linear}
D.~R. Sheehy.
\newblock Linear-size approximations to the {V}ietoris-{R}ips filtration.
\newblock {\em Discrete \& Computational Geometry}, 49(4):778--796, 2013.

\bibitem{sheehy15output}
D.~R. Sheehy.
\newblock An output-sensitive algorithm for computing weighted
  $\alpha$-complexes.
\newblock In {\em Proceedings of the Canadian Conference on Computational
  Geometry}, 2015.

\bibitem{sheehy20greedypermutations}
D.~R. Sheehy.
\newblock greedypermutations.
\newblock https://github.com/donsheehy/greedypermutation, 2020.

\bibitem{watson81computing}
D.~F. Watson.
\newblock Computing the n-dimensional delaunay tessellation with application to
  voronoi polytopes.
\newblock {\em The Computer Journal}, 24(2):167--172, 1981.

\end{thebibliography}

  \appendix
  \section{Covering Lemmas}
\label{sec:covering_lemmas}

In this appendix, we give the full proofs of the lemmas that are used for analyzing the weighted offset filtration and the corresponding Voronoi filtration.

\weightedcover*

\begin{proof}
  If $\lambda_k \ge (1+\e)\alpha$, then choosing $i=k$ suffices to satisfy the two conditions.
  Otherwise, let $j$ be the maximum such that $\lambda_j \ge (1+\e)\alpha$.
  So, $\lambda_{j+1} < (1+\e)\alpha$.
  Choose $i\le j$ to minimize $\|p_i - p_k\|$.
  Then, $\lambda_i \ge \lambda_j\ge (1+\e)\alpha$ as desired.
  Using the covering property of the greedy permutation,
  \[
    \|p_i - p_k\| \le r_{j+1} \le \frac{\e}{1+\e}\lambda_{j+1} <\e \alpha.
  \]
\end{proof}

\weightedoffsetinterleaving*

\begin{proof}
  First, we will show that $\hat{P}^{\alpha} \subseteq P^\alpha$.
  It suffices to observe that the introduction of nonnegative weights can only increase the distance to the set.
  By definition, any point in $\hat{P}^\alpha$ has a point in $\hat{P}(\alpha)$ within weighted distance $\alpha$.
  The corresponding unweighted point in $P$ is within distance $\alpha$, so $x\in P^\alpha$ as desired.

  Next, we show that $\subseteq P^\alpha \subseteq \hat{P}^{(1+\e)\alpha}$.
  For any $x\in P\alpha$, there exists an index $k$ such that $\|p_k - x\| \le \alpha$.
  By Lemma~\ref{lem:weighted_cover}, there exists $i\le k$ such that $\lambda_i \ge (1+\e)\alpha$, and $\|p_i - p_k\| \le \e \alpha$.
  So, $w_i((1+\e)\alpha) = 0$ and thus
  \begin{align*}
    \pdist_{\hat{p_i}, \alpha(1+\e)}(x)
      &= \sqrt{\|p_i - x\|^2 + w_i(\alpha)^2}\\
      &= \|p_i - x\|\\
      &\le \|p_i - p_k\| + \|p_k - x\|\\
      &\le \e \alpha + \alpha\\
      &= (1+\e)\alpha
  \end{align*}
  Therefore, $x\in \hat{P}^{(1+\e)\alpha}$, and so, $\subseteq P^\alpha \subseteq \hat{P}^{(1+\e)\alpha}$.
\end{proof}

\emptyclippedcells*

\begin{proof}
  Choose any $k\in [1:n]$ and any value of $\alpha > (1+\e)\lambda_k$.
  Suppose for contradiction that there is a point $x\in V_k^\alpha$.
  In particular, $x\in b_k(\alpha)$ and so $\|p_k - x \|\le \lambda_k$.
  By Lemma~\ref{lem:weighted_cover}, there is a point $p_i$ such that $\|p_i - p_k\| < \e \lambda_k$ and $\lambda_i \ge (1+\e)\lambda_k$.
  So, $w_i(\alpha) = 0$ and thus,
  \begin{align*}
    \pdist_{\hat{p_i}^\alpha}(x)^2
      &= \|p_i-x\|^2 + \alpha^2 - \lambda_i^2 \\
      &\le (\|p_i - p_k\| + \|p_k - x\|)^2 + \alpha^2 - \lambda_i^2\\
      &< (\e\lambda_k + \|p_k - x\|)^2 + \alpha^2 - \lambda_i^2\\
      &= \e^2\lambda_k^2 + 2\e\lambda_k\|p_k - x\| + \|p_k - x\|^2 + \alpha^2 - \lambda_i^2\\
      &\le (\e^2 + 2\e)\lambda_k^2 + \|p_k - x\|^2 + \alpha^2 - \lambda_i^2\\
      &\le (\e^2 + 2\e)\lambda_k^2 + \|p_k - x\|^2 + \alpha^2 - (1+\e)\lambda_k^2\\
      &= \|p_k - x\|^2 + \alpha^2 - \lambda_k^2\\
      &= \pdist_{\hat{p_k}^\alpha}(x)^2.
  \end{align*}
  It follows that $x\notin \vor_{\hat{P}}^\alpha(\hat{p}_k)$ and therefore $x\notin V_i^\alpha$, a contradiction.
\end{proof}

\clippedcellscoveroffsets*

\begin{proof}
  By definition, every clipped cell $V_i^\alpha$ is contained the ball $b_i^\alpha$ so
  \[
    \bigcup_{i\in [n]}V_i^\alpha \subseteq \bigcup_{i\in [n]}b_i^\alpha = \hat{P}^\alpha.
  \]

  Next, observe that for all $x\in \hat{P}^\alpha$, there is a nearest weighted point $\hat{p}_i\in \hat{P}$.
  In other words, $x\in \vor_{\hat{P}, \alpha}(\hat{p}_i)$.
  By the definition of the offsets, $\pdist_{\hat{p}_i, \alpha} \le \alpha$.
  Therefore, $x\in b_i^\alpha$.
  So, $x \in \vor_{\hat{P}, \alpha}(\hat{p}_i) \cap b_i^\alpha = V_i^\alpha$.
  It follows that $\hat{P}^\alpha \subseteq \bigcup_{i\in [n]}V_i^\alpha$.
\end{proof}

  \section{Analysis}
\label{sec:analysis}

  There are four main parts of the computation to analyze:
  \begin{enumerate}
    \item the preprocessing phase, which includes the computation of the greedy permutation, the finding of nearest predecessors, and the establishing of freezing times and weight functions;
    \item the point location work to begin the insertion of new points into the triangulation;
    \item the global ordering of flips required during the computation as weights change; and
    \item the cost of Voronoi refinement to maintain the aspect ratio bound.
  \end{enumerate}

\paragraph*{Preprocessing}
The straightforward algorithm for computing the greedy permutation as presented by Gonzalez~\cite{gonzalez85clustering} as well as Dyer and Frieze~\cite{dyer85simple} runs in quadratic time.
An approach that exploits the intrinsic low-dimensionality to reduce the number of distance computations was developed by Clarkson~\cite{clarkson03nearest} and was shown to run in $O(n\log\spread)$ time by Har-Peled and Mendel~\cite{har-peled06fast}.
The constants here and throughout depend exponentially on the dimension.
Variants of these algorithms are available in the Python \texttt{greedypermutations} library~\cite{sheehy20greedypermutations}.
Har-Peled and Mendel also showed that a significantly more complex algorithm can produce a greedy permutation in $O(n\log n)$ time.
All of the existing algorithms also produce, as a byproduct, the nearest predecessors and thus, the insertion radii.

\paragraph*{Point Location}
The first step to insert a new point into a Delaunay triangulation is to identify the triangle that contains the point; this is the \emph{point location} step.
For randomized incremental construction, this can be accomplished by maintaining the location of each uninserted point and updating it as necessary with each flip (see, for example Guibas et al.~\cite{guibas92randomized}).
Another approach is to find each point by walking from triangle to triangle and this is often the fastest in practice~\cite{devillers98improved}

More generally, our algorithm resembles the Sparse Voronoi Refinement algorithm of Hudson et al.~\cite{hudson06sparse} in that it alternates insertion of input points with refinement.
The point location in that algorithm works by associating each uninserted point with the set of Delaunay circumspheres that contain it.
The total cost of point location in that algorithm is $O(n\log \spread)$.
It was later shown that it was more efficient to store the uninserted points in the Voronoi cells that contain them~\cite{acar07svr}.
Miller et al.~\cite{miller13fast} showed that for well-spaced points, a variation of the walk-based point location that identifies the containing Voronoi cell can be performed in constant time per point if the points are added in a greedy order and the walk starts from the nearest predecessor in the ordering.
This approach applies directly to our algorithm as well, and is especially convenient given that we needed to compute the greedy permutation anyway.
Most of the point location work has been shifted to the greedy permutation and the total cost of the point location walks is $O(n)$.

\paragraph*{Tracking Flips}
A priority queue is used to track the flips that can or will occur in the course of the construction.
The flips associated with the insertion of Steiner points are not tracked in this priority queue.

Within any given wave, every flip adds an edge to one of the vertices that is added in that wave.
The degree of any vertex that it is added in the wave is at most a constant, because the points are well-spaced.
This means that the number of potential flips (not all flips are executed) is $O(1)$ per points.
As a result, there are $O(n)$ total flips in the event queue
So, the total work of maintaining thee event queue is $O(n\log n)$.

The number of flips in a wave can be counted precisely using the observation of Miller and Sheehy~\cite{miller14new} in their work on an output-sensitive algorithm for computing Voronoi diagrams.
They showed that the removal of a subset of points by adjusting the weights in lockstep, which is equivalent to the action of one wave, results flips the correspond exactly to certain intersections of the Voronoi cells before and after the removal.
It followed that the flips per point was bounded by the aspect of the Voronoi cells.
In our case the aspect is bounded by a constant and thus we get a constant number of flips.

\paragraph*{The Cost of Voronoi Refinement}
The principle that drives the efficiency of the Sparse Voronoi Refinement algorithm of Hudson et al.~\cite{hudson06sparse} is to maintain the well-spaced condition as a strict algorithmic invariant.
This is in contrast to classic Delaunay refinement in which the input points are all added first.
Our algorithm resembles a mix between these approaches, achieving the efficiency of the former with the simplicity of the latter.
Once per wave, we add Steiner points.
Because the point set was well-spaced at the start of the wave and the points added in a wave form a net, the points are still well-spaced at the end of the wave (albeit with a larger constant).
So, we are effectively running a standard Voronoi refinement on an input that is already nearly refined.
Having a bound on the spacing of the points implies that every Steiner point can be added in constant time (again, this was a major lesson from Hudson et al.~\cite{hudson06sparse}).
Thus, the total cost of all the refinement steps in the algorithm will be linear in the total number of Steiner points.

According to the standard size analysis of refinement algorithms, the total number of points will be asymptotically optimal~\cite{ruppert95delaunay} and will be at most $O(n\log \spread)$ (see Hudson et al.~\cite{hudson06sparse}).
However, for many point sets, this bound is loose and it has been shown that for a broad class of inputs, the number of Steiner points will be $O(n)$~\cite{miller08linear,hudson09size,sheehy12new}.
So, the worst case running time of the refinement step is $O(n\log \spread)$, but one should not be surprised to see output sizes closer to $O(n)$ for real examples.

\end{document}